\newcommand{\shortbar}{\begin{center}\rule{5ex}{0.1pt}\end{center}}
\theoremstyle{plain}
\newtheorem{theorem}{Theorem}[section]
\newtheorem{lemma}[theorem]{Lemma}
\newtheorem{corollary}[theorem]{Corollary}
\theoremstyle{definition}
\theoremstyle{remark}
\newcommand{\defeq}{\stackrel{\textrm{def}}{=}}
\newcommand{\E}{{\mathbb E\/}}
\newcommand{\ceil}[1]{\lceil #1 \rceil}
\newcommand{\poly}{\operatorname{poly}}
\newcommand{\rank}{\operatorname{rank}}
\newcommand{\ID}{\operatorname{ID}}
\newcommand{\Compact}{\operatorname{Compact}}
\newcommand{\INDSTATE}[1][1]{\STATE\hspace{#1\algorithmicindent}}
\newcommand{\ignore}[1]{}
\newcommand{\eps}{\epsilon}
\newcommand{\oneeps}{\frac{1}{\epsilon}}
\newcommand{\repeatlemma}[1]{%
\begingroup
\renewcommand{\thelemma}{\ref{#1}}%
\expandafter\expandafter\expandafter\lemma
\csname replemma@#1\endcsname
\endlemma
\endgroup
}
\xdef\csname replemma@#1\endcsname{%
\unexpanded\expandafter{\BODY}%
}%
\unskip\label{#1}\endlemma
\newcommand{\repeattheorem}[1]{%
\begingroup
\renewcommand{\thetheorem}{\ref{#1}}%
\expandafter\expandafter\expandafter\theorem
\csname reptheorem@#1\endcsname
\endtheorem
\endgroup
}
\xdef\csname reptheorem@#1\endcsname{%
\unexpanded\expandafter{\BODY}%
}%
\unskip\label{#1}\endtheorem
\begin{document}

\title{Optimal Gossip Algorithms for Exact and Approximate Quantile Computations}
\author{Bernhard Haeupler \\ CMU \and Jeet Mohapatra \\ MIT \and Hsin-Hao Su \\ UNC Charlotte}
\date{}

\maketitle
\begin{abstract}
This paper gives drastically faster gossip algorithms to compute exact and approximate quantiles.

\smallskip

Gossip algorithms, which allow each node to contact a uniformly random other node in each round, have been intensely studied and been adopted in many applications due to their fast convergence and their robustness to failures. Kempe et al.~\cite[FOCS'03]{KDG03} gave gossip algorithms to compute important aggregate statistics if every node is given a value. In particular, they gave a beautiful $O(\log n + \log \oneeps)$ round algorithm to $\eps$-approximate the sum of all values and an $O(\log^2 n)$ round algorithm to compute the exact $\phi$-quantile, i.e., the the $\ceil{\phi n}$ smallest value. 

\smallskip

We give an quadratically faster and in fact optimal gossip algorithm for the exact $\phi$-quantile problem which runs in $O(\log n)$ rounds. We furthermore show that one can achieve an exponential speedup if one allows for an $\eps$-approximation. 
 We give an $O(\log \log n + \log \oneeps)$ round gossip algorithm which computes a value of rank between $\phi n$ and $(\phi+\eps)n$ at every node.
 Our algorithms are extremely simple and very robust - they can be operated with the same running times even if every transmission fails with a, potentially different, constant probability. We also give a matching $\Omega(\log \log n + \log \oneeps)$ lower bound which shows that our algorithm is optimal for all values of $\eps$. 



\end{abstract}
\thispagestyle{empty}
\clearpage
\setcounter{page}{1}

\section{Introduction}
Today, due to the vast amount of data and advances in connectivity between computers, distributed data processing has become increasingly important. In distributed systems, data is stored across different nodes. When one requires some aggregate properties of the data, such as, sums, ranks, quantiles, or other statistics, nodes must communicate in order to compute these properties.  Aggregating data in an efficient and reliable way is a central topic in distributed systems such as P2P networks and sensor networks \cite{YG02, MFHH02, SBAS04}.

\smallskip

We consider uniform gossip protocols, which are often very practical due to their fast convergence, their simplicity, and their stability under stress and disruptions. In uniform gossiping protocols, computation proceeds in synchronized rounds. In each round, each node chooses to {\bf pull} or {\bf push}. In a push a node chooses a message, which is delivered to a uniformly random other node. In a pull each node receives a message from a random node. The message size is typically restricted to $O(\log n)$ bits. The (time) complexity of an algorithm is measured by the number rounds executed. One typically wants algorithms that succeed with high probability, i.e., with probability at least $1 - 1/\poly(n)$.

\smallskip


In this paper, we study the quantile computation problem. In the {\bf exact $\phi$-quantile problem} each node $v$ is given a 
 $O(\log n)$ bit value $x_v$ and wants to compute the $\ceil{\phi n}$ smallest value overall. In the {\bf $\eps$-approximate $\phi$-quantile problem} each node wants to compute a value whose rank is between $(\phi + \eps) n$ and $(\phi - \eps) n$.

%
%
%
%
%

\smallskip

Previously, Kempe et al.~\cite{KDG03} gave a beautiful and simple $O(\log n + \log \oneeps)$ round gossip algorithm to approximate the sum of all values up to a $(1 \pm \epsilon)$ factor. They also showed how to use this algorithm to solve the exact $\phi$-quantile problem in $O(\log^2 n)$ rounds with high probability. 

\smallskip

The main result of this paper is a quadratically faster gossip algorithm for the $\phi$-quantile problem. The $O(\log n)$ round complexity of our algorithm means that the exact $\phi$-quantile can be computed as fast as broadcasting a single message. 

\begin{reptheorem}{firsttheorem}\label{thm:exact}
For any $\phi \in [0,1]$ there is a uniform gossip algorithm which solves the exact $\phi$-quantile problem in $O(\log n)$ rounds with high probability using $O(\log n)$ bit messages.
\end{reptheorem}

Clearly the running time of this algorithm is optimal as $\Omega(\log n)$ rounds are known to be neccesary to even just broadcast to each node the $\phi$-quantile value, after it has been identified. 

\smallskip

Equally interestingly we show that one achieve even faster algorithms if one considers the approximate $\phi$-quantile problem. While a $O(\frac{1}{\eps^2}\log n)$ round algorithm for an $\eps$-approximation follows from simple sampling and a $O(\log n)$ round algorithm computing the median up to a $\pm O(\sqrt{\frac{\log n}{n}})$ (but not general quantiles) was given by Doerr et al.~\cite{DGMSS11}, no approximation algorithm for the quantile problem with a sub-logarithmic round complexity was known prior to this work. We give an $O(\log \log n + \log (1/\eps))$ round algorithm for $\eps$-approximating any $\phi$-quantile which, for arbitrarily good constant approximations, is exponentially faster than our optimal exact algorithm:

\begin{theorem}\label{thm:themain}
For any constant or non-constant $\epsilon(n) > 0$ and any $\phi \in [0,1]$, there exists a uniform gossip algorithm that solves the $\epsilon$-approximate $\phi$ quantile problem in $O(\log \log n + \log \frac{1}{\eps(n)})$ rounds with high probability using $O(\log n)$ bit messages.
\end{theorem}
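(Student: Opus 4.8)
The plan is to reduce the $\eps$-approximate $\phi$-quantile problem to a small number of \emph{rank-estimation} subroutines and then binary-search over the value space. The key primitive is this: given a candidate value $y$, estimate the rank $\rank(y) = |\{v : x_v \le y\}|$ up to additive error $\eps n / c$ for a suitable constant $c$. Since every node holds a value, testing ``is $x_v \le y$?'' turns the rank into a \emph{count} of marked nodes, i.e.\ a sum of $0/1$ indicators. By the Kempe–Doerr–Gehrke sum primitive \cite{KDG03}, such a sum can be $(1\pm\eps')$-approximated in $O(\log n + \log(1/\eps'))$ rounds --- but that is already $\Omega(\log n)$, which is too slow. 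The crucial observation is that for a \emph{relative} error we do not need the full $O(\log n)$: a count that is $\Theta(\eps n)$ or larger only needs $O(\log(1/\eps))$ rounds of the averaging/sum process \emph{after} an initial $O(\log\log n)$-round phase that boosts a uniformly random $O(\mathrm{polylog})$-size sample so that the empirical rank on the sample already has additive error $\eps$ with high probability. Concretely, I would first take a uniform sample $S$ of size $s = \Theta(\eps^{-2}\log n)$ via $O(\log\log n)$ rounds of a standard sampling/token-spreading gossip step, then compute ranks \emph{within} $S$ exactly; Hoeffding gives that empirical ranks approximate true ranks to within $\eps/2$.

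Next I would do the binary search. The value space can be taken to have $O(\log n)$-bit entries, so naively binary search needs $O(\log n)$ rounds of the rank primitive, again too slow. Instead I would do a \emph{$\mathrm{poly}(\log n)$-ary} search: in each phase, pick $\Theta(\log n)$ equally spaced candidate values (equivalently, the $\{i \cdot s / \Theta(\log n)\}$-order statistics of the sample $S$), broadcast all of them in one $O(\log n)$-bit-per-round-aggregated batch, have every node compute all $\Theta(\log n)$ indicator bits, and aggregate all counts in parallel. Each phase shrinks the candidate interval by a $\Theta(\log n)$ factor, so $O(\log_{\log n} n) = O(\log n / \log\log n)$ phases suffice --- still slightly too slow. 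The fix, and the real point, is that we should search not over values but over \emph{ranks in $S$}: once $S$ is distributed, every node can locally sort $S$ and the answer is simply the element of $S$ whose $S$-rank is $\lceil \phi s\rceil$, which requires \emph{zero} further rounds once $S$ is known at every node. So the entire algorithm is: (i) $O(\log\log n)$ rounds to build a common random sample $S$ of size $\Theta(\eps^{-2}\log n)$ and deliver it to every node; (ii) each node outputs its local $\lceil\phi s\rceil$-th smallest element of $S$.

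The main obstacle is step (i): distributing a single common $\Theta(\eps^{-2}\log n)$-element set to \emph{all} $n$ nodes in $O(\log\log n + \log(1/\eps))$ rounds with $O(\log n)$-bit messages. A single designated sample element broadcasts in $O(\log n)$ rounds, not $O(\log\log n)$, so a different mechanism is needed: I would run the sampling and dissemination \emph{together}, using a squaring/doubling scheme in which the set of ``informed'' nodes (those holding the full current sample) at least squares each round, reaching all $n$ nodes in $O(\log\log n)$ rounds, while simultaneously each round at most doubles the sample size carried, capped at $\Theta(\eps^{-2}\log n)$ --- the message-size budget is respected because a node forwards only a random $O(\log n)$-bit \emph{portion} of its sample per round and we argue (via a coupon-collector / Chernoff argument on which portions reach which nodes) that after $O(\log\log n + \log(1/\eps))$ rounds every node has assembled the entire common sample with high probability. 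Making this dissemination argument rigorous --- in particular handling the correlations between ``which nodes are informed'' and ``which sample pieces they hold,'' and showing the common-randomness/agreement property so that all nodes agree on the \emph{same} set $S$ --- is where the bulk of the technical work lies; the sampling-accuracy part (Hoeffding on $|S|=\Theta(\eps^{-2}\log n)$) and the robustness-to-failures part (every step is a gossip primitive that already tolerates constant per-message failure probability, absorbed into the constants) are then routine.
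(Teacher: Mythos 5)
Your plan ultimately reduces to: (i) get a sample $S_v$ of size $s=\Theta(\eps^{-2}\log n)$ to every node, and (ii) let each node output the $\lceil\phi s\rceil$-th smallest element of its sample. Step (ii) is fine (and you correctly note it costs zero extra rounds), but step (i) is where the proposal breaks, and it breaks for an information-theoretic reason that no dissemination scheme can circumvent. A node that ends up holding a set of $\Theta(\eps^{-2}\log n)$ values drawn from an $O(\log n)$-bit universe has received $\Omega(\eps^{-2}\log^2 n)$ bits of information. In $T$ rounds with $O(\log n)$-bit messages a node receives at most $O(T\log n)$ bits. With the claimed $T=O(\log\log n + \log(1/\eps))$ this gives $O((\log\log n + \log(1/\eps))\log n)$ bits, which for (say) constant $\eps$ is $\Theta(\log n \log\log n)$, strictly smaller than $\Theta(\log^2 n)$. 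The ``forward a random $O(\log n)$-bit portion plus coupon-collector'' idea cannot beat this counting argument; it only determines \emph{which} bits a node gets, not \emph{how many}. This holds whether the $S_v$ are a single common $S$ or per-node samples. So the doubling/sampling route inherently requires either $\Omega(\eps^{-2}\log n)$ rounds or $\omega(\log n)$-bit messages, and indeed this is exactly the obstruction the paper discusses in its technical summary: the doubling algorithm does give the right round complexity but only with $\Theta(\log^2 n/\eps^2)$-bit messages (it appears in Appendix~\ref{apx:sampling}), and even with compactor-based sketches they could not push the message size below $\omega(\log n)$.

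The paper's actual proof is a genuinely different mechanism. Instead of any node trying to \emph{collect} a sample and compute a quantile locally, nodes repeatedly \emph{replace} their own value by a simple order statistic of $2$ or $3$ freshly sampled peer values (Algorithms~\ref{algo:shift} and~\ref{alg:median}). Phase~I (\texttt{2-TOURNAMENT}, taking the min or max of two samples, with a probabilistic last step to hit a target threshold exactly) rescales the distribution so that the $\phi\pm\eps$ quantiles migrate to the median; Phase~II (\texttt{3-TOURNAMENT}, taking the median of three samples) contracts the fraction of nodes far from the median first geometrically and then doubly exponentially, so that after $O(\log(1/\eps)+\log\log n)$ iterations all but an $n^{-\Omega(1)}$ fraction of nodes hold an $\eps$-approximate median, and $O(1)$ final samples suffice. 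Each iteration is $O(1)$ rounds and $O(\log n)$ bits per message, because a node only ever forwards a single current value. This ``population dynamics'' / consensus-style approach sidesteps the per-node information bottleneck that kills your scheme. Finally, since that tournament analysis requires $\eps=\Omega(1/n^{0.096})$, the paper handles arbitrarily small $\eps(n)$ by observing that for $\eps=O(n^{-0.096})$ one has $\log(1/\eps)=\Omega(\log n)$, so the $O(\log n)$-round exact algorithm (Theorem~\ref{firsttheorem}, itself built by bootstrapping the approximate algorithm) already matches the claimed bound. Your proposal has no analogue of this and, as written, does not constitute a proof.
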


We also give a $\Omega(\log \oneeps)$ and a $\Omega(\log \log n)$ lower bound for the $\eps$-approximate $\phi$-quantile problem, which shows that our algorithm is optimal for essentially any value of $\eps$. 

\begin{reptheorem}{lowerbound}
For any $\frac{10\log n}{n} < \epsilon < 1/8$ and $\phi \in [0,1]$,  any gossip algorithm that uses less than $\frac{1}{2}\log \log n$ or less than $\log_{4} \frac{8}{\epsilon}$ round fails to solve the $\eps$-approximate $\phi$-quantile problem with probability at least $1/3$. This remains true even for unlimited message sizes. 
\end{reptheorem}

We furthermore show that our algorithms can be made {\bf robust to random failures}, i.e., the same round complexities apply even if nodes fail with some constant probability. We remark that a small caveat of excluding an $\exp(-t)$ fraction of nodes after a runnig time of $t$ rounds is necessary and indeed optimal given that this is the expected fraction of nodes which will not have participated in any successful push or pull after $t$ rounds.

\begin{reptheorem}{robustness}\label{thm:robust}
Suppose in every round every node fails with a, potentially different, probability bounded by some constant $\mu < 1$. For any $\phi \in [0,1]$ there is a gossip algorithm that solves the $\phi$ quantile problem in $O(\log n)$ rounds. For any $t$ and any $\epsilon(n) > 0$, there furthermore exists a gossip algorithm that solves the $\epsilon$-approximate $\phi$-quantile problem in $O(\log \log n + \log \frac{1}{\eps(n)} + t)$ rounds for all but $\frac{n}{2^{t}}$ nodes, with high probability.
\end{reptheorem}

Despite being extremely fast and robust our algorithms remain very simple. In fact they do little more than repeatedly sampling two or three nodes, requesting their current value and selecting the largest, smallest, or median value. Given that in many cases an $\eps$-approximation is more than sufficient we expect this algorithm to easily find applications in areas like sensor networks and distributed database systems. For instance, suppose that a sensor network consisting of thousands of devices is spread across an object to monitor and control the temperature. Say the top and bottom $10\%$-quantiles need special attention. By computing the $90\%$- and $10\%$-quantile, each node can determine whether it lies in the range. It is unlikely that such a computation needs to exact. In fact, for any $0 < \eps < 1$, running $O(1/\eps)$ approximate quantile computations suffices for each node to determine its own quantile/rank up to an additive $\eps$. The fact that this can be done in $O(\log \log n)$ rounds further demonstrates the power of approximations as there is not even a $o(n)$ algorithm known\footnote{The trivial algorithm which broadcasts the maximum value $n$ times requires $O(n\log n)$ rounds. Using network coding gossip~\cite{Haeupler16} one can improve this to $O(n)$ rounds. It is likely that computing the exact rank at each node cannot be done faster.} which allows each node to compute its quantile/rank exactly.

\begin{corollary}\label{thm:themain}
For any constant or non-constant $\eps > 0$ and given that each node has a value there is a gossip algorithm that allows every node to approximate the quantile of its value up to an additive $\epsilon$ in $\oneeps \cdot O(\log \log n + \log \frac{1}{\eps})$ rounds with high probability using $O(\log n)$ bit messages.
\end{corollary}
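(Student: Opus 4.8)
The plan is to reduce this to $O(1/\eps)$ invocations of the $\eps$-approximate $\phi$-quantile algorithm of Theorem~\ref{thm:themain}, run one after another on a fixed grid of quantile values, followed by a purely local post-processing step at each node. First I would reduce to the case of distinct values by appending each node's $O(\log n)$-bit identifier as low-order bits of its value; this keeps messages $O(\log n)$ bits and makes the rank $r_v=|\{u:x_u\le x_v\}|$ and quantile $q_v=r_v/n$ well defined, so the goal becomes: every node $v$ outputs $\hat q_v$ with $|\hat q_v-q_v|\le\eps$.

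Next, set $\eps'=\eps/4$ and $k=\lceil 2/\eps\rceil$, and for $i=0,\dots,k$ put $\phi_i=\min\{i\eps/2,\,1\}$. I would run the $\eps'$-approximate $\phi_i$-quantile algorithm for $i=0,1,\dots,k$ sequentially; after run $i$, every node holds a value $y_i$ whose rank lies in $[(\phi_i-\eps')n,(\phi_i+\eps')n]$ (clamped to $[0,n]$ near the boundary). This costs $O(1/\eps)\cdot O(\log\log n+\log\tfrac1{\eps'})=\oneeps\cdot O(\log\log n+\log\tfrac1\eps)$ rounds with $O(\log n)$-bit messages, as claimed. Each node then computes $i^\ast_v=\max\{i:y_i\le x_v\}$ (with $i^\ast_v=-1$ and $\phi_{-1}:=-\eps/2$ by convention if no such $i$ exists) and outputs $\hat q_v=\phi_{i^\ast_v}+\eps/4$, truncated to $[0,1]$.

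For correctness, $y_{i^\ast_v}\le x_v$ gives $q_v\ge \rank(y_{i^\ast_v})/n\ge \phi_{i^\ast_v}-\eps'$, while $x_v<y_{i^\ast_v+1}$ — or the boundary case $i^\ast_v=k$, which already forces $q_v\ge 1-\eps'$ — gives $q_v\le \rank(y_{i^\ast_v+1})/n\le \phi_{i^\ast_v}+\eps/2+\eps'$. Hence $q_v$ lies in a length-$\eps$ interval centered at $\hat q_v$, so $|\hat q_v-q_v|\le\eps/2\le\eps$. Note that no monotonicity of the $y_i$ is needed, since the $\phi_i$ form a deterministic grid and each $y_i$ only has to be individually accurate. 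Finally, since $1/\eps$ may be as large as $\poly(n)$, I would first boost the per-run success probability of Theorem~\ref{thm:themain} to $1-n^{-c}$ for a suitable constant $c$ (standard) and then union-bound over the $k+1$ runs.

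The only point that really needs care is the composition rather than any single ingredient: running the quantile subroutine $\Theta(1/\eps)$ times in sequence is exactly what produces the $1/\eps$ factor in the stated bound, and this large number of runs is why the per-run failure probability must be driven below any fixed inverse polynomial before the union bound is applied. Everything else — the tie-breaking reduction, clamping the rank windows for $\phi$ near $0$ and $1$, and the arithmetic showing that the grid spacing plus the $\pm\eps'$ per-run error sums to at most $\eps$ — is routine.
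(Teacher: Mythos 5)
Your proposal is correct and follows essentially the same approach the paper intends: the paper's own justification is the single remark that "running $O(1/\eps)$ approximate quantile computations suffices for each node to determine its own quantile/rank up to an additive $\eps$," and you carry this out by running the $\eps'$-approximate $\phi_i$-quantile algorithm on the grid $\phi_i=i\eps/2$, with a local binary-search-free lookup $i^\ast_v=\max\{i:y_i\le x_v\}$ and clean handling of the two boundary cases. Your observations about tie-breaking, the needless monotonicity of the $y_i$, and boosting the per-run success probability before the union bound over $O(1/\eps)$ runs (since $1/\eps$ may be polynomial in $n$) are the right details to flag; the paper leaves them implicit.
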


\paragraph{Technical Summary.} While computing exact quantiles seems to be a very different problem from the approximate problem at first we achieve our exact algorithm by first designing an extremely efficient solution to the $\eps$-approximate quantile problem running in $O(\log \log n + \log \oneeps)$ rounds. This algorithm is in some sense based on sampling and inherently is not able to work for the exact quantile problem or too small $\eps$ itself, as it might randomly discard the target value in its first iteration. However, we show that the algorithm does works for $\eps$ larger than some polynomial in $n$. This allows us to bootstrap this algorithm and repeatedly remove a polynomial fraction of values within $O(\log n)$ rounds until, after a constant number of such iterations, only the target value persists. Overall this leads to an $O(\log \log n + \log \oneeps)$ algorithm for the $\eps$-approximate $\phi$-quantile problem which works for $\eps$ values that can be any function of $n$. This is actually generelizes the $O(\log n)$ algorithm for the exact $\phi$-quantile problem, too, given that the $\frac{1}{2n}$-approximate $\phi$-quantile problem is accurate enough to compute exact ranks and quantiles. 

\smallskip

Next we outline an, indeed very simple, intuition of why a time complexity of $O(\log\log n + \log \oneeps)$ is conceivable for the quantile computation problem: Suppose we sample $\Theta(\log n / \epsilon^2)$ many values uniformly and independently at random. With high probability the $\phi$-quantile of the sampled values is an $\epsilon$-approximation to the $\phi$-quantile in the original data. Since each node can sample $t$ node values (with replacement) in $t$ rounds, one immediately get an $O(\log n / \epsilon^2)$ round algorithm that uses $O(\log n)$ bit messages.

\smallskip

Using larger message sizes, it is possible to reduce the number of rounds. Consider the following {\it doubling algorithm}: Each node $v$ maintains a set $S_v$ such that initially $S_v = \{ x_v \}$. In each round, let $t(v)$ be the node contacted by $v$. Node $v$ updates $S_v$ by setting $S_v \leftarrow S_v \cup S_{t(v)}$. Since the set size essentially doubles each round, after $\log O(\log n /\epsilon^2) = O(\log \log n + \log \oneeps)$ rounds, we have sampled $\Omega(\log n /\epsilon^2)$ values uniformly, albeit not quite independently, at random. A careful analysis shows that indeed a $O(\log \log n + \log \oneeps)$ running time can be achieved using messages of size $\Theta(\log^2 n / \epsilon^2)$ bits. 

\smallskip

Our first approach for reducing the message sizes tried to utilize the quantile approximation sketches from the streaming algorithm community (see related work). We managed to reduce the message complexity to $O(\frac{1}{\epsilon}\cdot \log n \cdot (\log \log n + \log(1/\epsilon)))$ by adopting the compactor ideas from these streaming sketches. A write-up of this might be of independent interest and can be found in the Appendix \ref{apx:sampling}. Unfortunately even if one could losslesly port the state-of-the-art compactors scheme from~\cite{KLL16} into this setting, the $\Omega(\oneeps \log \log (1/\delta))$ for getting  $\eps$-approximate quantile sketches with probability $1-\delta$ suggests that one cannot achieve a $o(\log n \log \log n)$ message size this way, even if $\epsilon$ is a constant. In contrast to most other distributed models of computation gossip algorithms furthermore do not allow to easily shift extra factors in the message size over to the round complexity. In fact, we did not manage to devise any $o(\log n)$ round gossip algorithm based on sketching which adheres to the standard $O(\log n)$ bound on message sizes. 

\smallskip

Instead of sticking to the centralized mentality where each node tries to gather information to compute the answer, 
consider the following  \texttt{3-TOURNAMENT} mechanism. In each iteration, each node uniformly sample three values (in three rounds) and assign its value to the middle one\footnote{Such gossip dynamics have also been analyzed in \cite{DGMSS11} for the setting of adversarial node failures and in \cite{feinerman2017breathe} for the setting of random message corruptions.}. Intuitively nodes close to the median should have a higher probability of surviving and being replicated. Indeed, we show that the number of nodes that have values that are $\eps$ close to the median grows exponentially for the first $O(\log \oneeps)$ iterations. After that, the number of nodes with values more than $\eps n$ away from the median decreases double exponentially. For sufficiently large $\eps$ this simple \texttt{3-TOURNAMENT} algorithm gives an approximation of the the median in $O(\log \oneeps + \log \log n)$ iterations. In general, if we want to approximate the $\phi$-quantile, we shift the $[\phi -\epsilon, \phi+\epsilon]$ quantiles to the quantiles around the median by the following \texttt{2-TOURNAMENT} mechanism. If $\phi < 1/2$, each node samples two values and assign its value to the higher one. The case for $\phi > 1/2$ is symmetric. Intuitively this process makes it more likely for nodes with higher/smaller values to survive. Indeed, we show that with a bit of extra care in the last iterations, $O(\log \frac{1}{\epsilon})$ invocations suffice to shift the values  around the $\phi$-quantile to almost exactly the median, at which point one can apply the median approximation algorithm. 





\smallskip

Finally, our lower bound comes from the fact that if one chooses $\Theta(\epsilon n)$ nodes and either gives them a very large or a very small value, then knowing which of the two cases occurred is crucial for computing any $\eps$-approximate quantiles. However, initially only these $\Theta(\epsilon n)$ nodes have such information. We show that it takes $\Omega(\log \log n + \log \oneeps)$ rounds to spread the information from these nodes to every node, regardless of the message size. 

\paragraph{Related Work} The randomized gossip-based algorithms dates back to Demers et al.~\cite{DGHILSSSD87}. The initial studies are on the spreading of a single message \cite{FG85, Pittel87,KSSV00}, where Karp et al.~\cite{KSSV00} showed that $O(\log n)$ round and $O(n \log \log n)$ total messages is sufficient to spread a single message w.h.p. Kempe et al.~\cite{KDG03} studied gossip-based algorithms for the quantile computation problem as well as other aggregation problems such as computing the sum and the average. Kempe et al.~developed $O(\log n)$ rounds algorithm to compute the sum and average w.h.p. Later, efforts have been made to reduce the total messages to $O(n \log \log n)$ for computing the sum and the average \cite{KDNRS06, CP12}. Using the ability to sample and count, Kempe et al.~implemented the classic randomized selection algorithm \cite{Hoare61, FR75} in $O(\log^2 n)$ rounds. 
Doerr et al.~\cite{DGMSS11} considered gossip algorithms for the problem of achieving a stabilizing consensus algorithm under adversarial node failures. They analyze the median rule, i.e., sample three values and keep the middle value, in this setting and show that $O(\log n)$ rounds suffice to converge to an $\pm O(\sqrt{\frac{\log n}{n}})$-approximate median even if $O(\sqrt{n})$ adversarial node failures occur. Similar gossip dynamics were also studied in \cite{feinerman2017breathe} which considers randomly  corrupted (binary) messages.

The exact quantile computation is also known as the selection problem, where the goal is to select the $k$'th smallest element. The problem has been studied extensively in both centralized and distributed settings. Blum et al.~\cite{BFPRT73} gave a deterministic linear time algorithm for the problem in the centralized setting. In the distributed setting, Kuhn et al.~\cite{KLW07} gave an optimal algorithm for the selection problem that runs in $O(D \log_{D} n)$ rounds in the \textsf{CONGEST} model, where $D$ is the diameter of the graph. Many works have been focused on the communication complexity aspect (i.e. the total message size sent by each node) of the problem \cite{SFR83, RSS86, SSS88, SS89, SSS92, NSU97, GK04, Patt-Sharmir07}. Most of them are for complete graphs or stars.  Others studied specific class of graphs such as two nodes connected by an edge \cite{Rodeh82, CT87}, rings, meshes, and complete binary trees \cite{Frederickson83}.

The quantile computation problem has also been studied extensively in the streaming algorithm literature {\cite{MP80,ARS97, MRL99, GK01,GM08, HT10, Agarwal13, MMS13, FO15, KLL16}, where the goal is to approximate $\phi$-quantile using a small space complexity when the data comes in a single stream.

\section{The Tournament Algorithms}

In this section, we present our algorithm for the $\epsilon$-approximate quantile computation problem for sufficiently large $\epsilon$. For convenience, we use $a\pm b$ to denote the interval $[a-b, a+b]$.

\begin{theorem}\label{thm:quantile_approx} For any constant or non-constant $\epsilon(n) = \Omega(1/n^{0.096})$ and any $\phi \in [0,1]$, there exists a uniform gossip algorithm that solves the $\epsilon$-approximate $\phi$ quantile problem in $O(\log \log n + \log \frac{1}{\eps(n)})$ rounds with high probability using $O(\log n)$ bit messages.\end{theorem}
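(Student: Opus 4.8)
The plan is to prove the theorem in two stages. First a \texttt{2-TOURNAMENT} preprocessing phase that, for a general $\phi$, rewrites the data so that the band of original values of rank in $[(\phi-\eps)n,(\phi+\eps)n]$ straddles the \emph{median} of the rewritten data with slack $\Omega(\eps)$ on each side; then a \texttt{3-TOURNAMENT} phase that drives every node's value into that median band. I count ``time'' in iterations, each consisting of the $2$ or $3$ sampling rounds needed to fetch the contacted nodes' current values, so iteration count and round count agree up to a constant and each message carries a single $O(\log n)$-bit value. I handle the boundary cases $\phi\le 2\eps$ (where it suffices to output any value of rank at most $2\eps n$, which is $\ge 1$ since $\eps=\Omega(n^{-0.096})$) and symmetrically $\phi\ge 1-2\eps$ by the \texttt{2-TOURNAMENT} part alone aimed at the extreme quantile; for the main range $2\eps\le\phi\le 1-2\eps$ the two relevant original quantiles $v^-$ (the $\ceil{(\phi-\eps)n}$-th smallest value) and $v^+$ (the $\ceil{(\phi+\eps)n}$-th smallest value) each sit at rank $\Omega(\eps n)$, the regime in which the Chernoff bounds below have slack to spare.

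\textbf{The median phase.} I first analyze \texttt{3-TOURNAMENT} --- each node replaces its value by the median of three sampled values --- on a configuration that already has the median placed correctly, showing it collapses in $O(\log\tfrac1\eps+\log\log n)$ iterations. Fix a threshold $\tau$ and let $p_t$ be the fraction of nodes below $\tau$ after $t$ iterations; conditioned on the time-$t$ configuration, the count below $\tau$ at time $t+1$ is $\mathrm{Bin}(n,f(p_t))$ with $f(p)=3p^2-2p^3$, since a median of three independent samples lies below $\tau$ iff at least two of them do and distinct nodes sample independently. The map $f$ fixes $0,\tfrac12,1$, obeys $f(p)<p$ for $p<\tfrac12$, and (with $p=\tfrac12-x$) equals $\tfrac12-\tfrac32x+2x^3$; so if $p_0\le\tfrac12-\Omega(\eps)$ then $x_t=\tfrac12-p_t$ grows by a factor $\tfrac32(1-o(1))$ per iteration until it reaches a constant --- $O(\log\tfrac1\eps)$ iterations --- after which $f(p)\le 3p^2$ gives $p_{t+1}\le 6p_t^2$ (valid while $p_t\gg\sqrt{\log n/n}$), so $p_t$ falls doubly exponentially to $O(\log n/n)$ in $O(\log\log n)$ further iterations and then to $0$ within $O(1)$ more. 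Chernoff bounds $|p_{t+1}-f(p_t)|=O(\sqrt{\log n/n})$ per iteration; since the tracked quantities are always $\Omega(\eps)\gg\sqrt{\log n/n}$, even after amplification by the later expansion these deviations never overwhelm the signal, and a union bound over the $O(\log\tfrac1\eps+\log\log n)$ iterations and the thresholds $v^-,v^+$ completes the argument. Running this with $\tau=v^-$ and, symmetrically, with the fraction above $v^+$ (which obeys the same recursion) forces every node to end holding a value in $[v^-,v^+]$, i.e.\ of original rank in $[(\phi-\eps)n,(\phi+\eps)n]$.

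\textbf{The reduction phase.} It remains to bring the raw input to a configuration whose median band is $[v^-,v^+]$ with slack $\Omega(\eps)$. Assume $\phi\le\tfrac12$ (the other case is symmetric, minima and maxima exchanged). In one \texttt{2-TOURNAMENT} iteration in which each node adopts the smaller of two samples, the fraction of nodes below a threshold evolves in expectation by $p\mapsto 2p-p^2$ and the fraction above it by $q\mapsto q^2$, so after $t$ iterations the fraction above a value of original rank-fraction $\psi$ concentrates around $(1-\psi)^{2^t}$. Taking $t=\log_2(\ln2/\phi)\pm O(1)=O(\log\tfrac1\eps)$ brings $(1-(\phi+\eps))^{2^t}$ and $(1-(\phi-\eps))^{2^t}$ within a factor $2$ of $\tfrac12$. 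Because the integer choices of $t$ jump over the narrow target (the rank gap between $v^-$ and $v^+$ is only $2\eps n$), I then append a constant number of fine-tuning iterations: in each, a node adopts the smaller of two samples with a probability $\beta$ computed from $\phi,\eps,n$ and otherwise keeps its value, sliding the effective exponent continuously through $(2^t,2^{t+1})$. Since $v^-<v^+$, the exponent windows realizing ``fraction below $v^-$ at most $\tfrac12-\Omega(\eps)$'' and ``fraction above $v^+$ at most $\tfrac12-\Omega(\eps)$'' overlap in an interval of multiplicative width $1+\Omega(\eps/\phi)$, so a suitable $\beta$ achieves both at once. Concentration is handled as in the median phase, and both phases together use $O(\log\log n+\log\tfrac1\eps)$ iterations with $O(\log n)$-bit messages.

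\textbf{The main obstacle.} The delicate part is propagating Chernoff deviations through the $\omega(1)$ iterations of the two geometric-expansion regimes: a deviation $\approx\sqrt{p_s\log n/n}$ created at iteration $s$ is magnified by the product of the later local expansion factors ($\approx\tfrac32$ for \texttt{3-TOURNAMENT}, $\approx 2$ for \texttt{2-TOURNAMENT}), and one must verify that the amplified sum stays $o(\eps)$. This is exactly what forces a polynomial lower bound on $\eps$; the specific exponent $0.096$ is simply what a convenient, unoptimized accounting produces --- any fixed positive constant would serve, since the statement for all $\eps$ is then recovered by the bootstrapping sketched in the introduction. A secondary subtlety is checking that the bounded fine-tuning phase can place \emph{both} boundary quantiles on the correct side of the median at once and that its extra coin flips do not spoil the concentration bounds.
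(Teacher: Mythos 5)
Your proposal takes essentially the same two-phase route as the paper: a \texttt{2-TOURNAMENT} reduction (each node takes the min of two samples, so the fraction above a threshold squares per iteration) with a randomized fine-tuning step to avoid overshoot (the paper's single iteration with probability~$\delta$), followed by a \texttt{3-TOURNAMENT} median phase driven by the same recursion $f(p)=3p^2-2p^3$, with cumulative Chernoff error tracked through both geometric-expansion regimes (the paper's Lemmas~\ref{lem:first_phase_concentrate} and~\ref{lem:error2}). The growth-factor analysis, the $O(\log\frac1\eps)$ then $O(\log\log n)$ iteration count, the reason $\eps=\Omega(n^{-\Theta(1)})$ is needed, and the bootstrapping-to-exact remark all match the paper's structure.

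The one place your plan deviates is the very end of the \texttt{3-TOURNAMENT} phase: you drive $p_t$ all the way to $0$ by iterating the tournament a few more times after $p_t$ is $O(\log n/n)$, whereas the paper stops the tournament as soon as $|L_t|/n,|H_t|/n\le O(1/n^{2/3})$ and appends one additional step in which every node samples $K=O(1)$ values and outputs their median (Lemma~\ref{lem:last2}). Your variant is workable — from $O(\log n)$ surviving bad nodes the next iteration zeroes them out with probability $1-O(\log^2 n/n)$ by Markov — but note the failure exponent is stuck near $1$ there, since Chernoff loses traction once $p_t\ll\sqrt{\log n/n}$; the paper's final sampling step gives failure $O(n^{-2K/3+o(1)})$, so the exponent can be made as large as desired by increasing the constant $K$. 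This is a detail worth firming up, but it is an implementation choice, not a conceptual gap.
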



The algorithm is divided into two phases. In the first phase, each node adjusts its value so that the quantiles around the $\phi$-quantile will become the median quantiles approximately. In the second phase, we show how to compute the approximate median.

\subsection{Phase I: Shifting the Target Quantiles to Approximate Medians}\label{section:first_phase}
Let $H$ denote the nodes whose quantiles lie in $(\phi + \epsilon, 1]$, $L$ denote the nodes whose quantiles lie in $[0, \phi - \epsilon)$, and $M$ denote the nodes whose quantiles lie in $[\phi - \epsilon, \phi+\epsilon]$. Let $L_i$, $M_i$, and $H_i$ denote the set $L$, $M$, and $H$ respectively at the end of iteration $i$ for $i \geq 1$ and $L_0$, $M_0$, and $H_0$ denote the sets in the beginning. The goal is to show that by the end of the algorithm at iteration $t$, the size of $|L_t|$ and $|H_t|$ are $1/2 - \Omega(\epsilon)$ so that an approximate median lies in $M_t$, which consists of our target quantiles.

Let $h_0 = 1 - (\phi + \epsilon)$ and $l_0 = \phi - \epsilon$. We first consider the case where $h_0 \geq l_0$ and the other case is symmetric. Initially, we have $\frac{|H_0|}{n} \in h_0 \pm 1/n$.  Let $h_{i+1} = h_{i}^2$ for $i \geq 1$.  We run the \texttt{2-TOURNAMENT} algorithm (Algorithm \ref{algo:shift}) until the iteration $t$ such that $h_t\leq T$, where $T \defeq 1/2 -\epsilon$. We will show that $t = O(\log(1/\epsilon))$ and $\frac{|H_i|}{n}$ concentrates around $h_i$  for iteration $1\leq i \leq t-1$ and in the end we have $\frac{|H_t|}{n} \in  T \pm \frac{\epsilon}{2}$. 

\begin{algorithm}[H]
\begin{algorithmic}[1]
\STATE $h_0 \leftarrow (1-(\phi+\epsilon))$ \\
\STATE $i \leftarrow 0$, $T = 1/2 - \epsilon$.
\WHILE{$h_i > T$}
	\STATE $h_{i+1} \leftarrow h_i^2$ \\
	\STATE $\delta \leftarrow \min\left(1, \frac{h_i - T}{h_{i} - h_{i+1}}\right)$ \\
	\STATE With probability $\delta$ {\bf do}
		\INDSTATE \label{line:2tour1}Select two nodes $t_1(v)$ and $t_2(v)$ randomly 
		\INDSTATE \label{line:2tour2} $x_v \leftarrow \min(x_{t_1(v)}, x_{t_2(v)})$
	\STATE Otherwise {\bf do}
			\INDSTATE \label{line:1tour1} Select a node $t_1(v)$ randomly 
			\INDSTATE \label{line:1tour2} $x_v \leftarrow x_{t_1(v)}$
	\STATE $i \leftarrow i + 1$
\ENDWHILE
\end{algorithmic} 
\caption{\texttt{2-TOURNAMENT}($v$)}
\label{algo:shift}
\end{algorithm}

The algorithm ends when $h_i$ decreases below $T$. The lemma below bounds the number of iterations needed for this to happen. It can be shown that since $h^2_i$ squares in each iteration, the quantity $(1-h_i)$ roughly grows by a constant factor in each iteration. Since initially $(1-h_0) \geq \epsilon$, $O(\log(1/\epsilon))$ iterations suffice for $h_i$ to decrease below $T$. 

\begin{replemma}{shift} Let $t$ denote the number of iterations in Algorithm \ref{algo:shift}, $t \leq \log_{7/4}(4/\epsilon) + 2$. \end{replemma}
\begin{proof}
The algorithm ends when $h_t \leq 1/2 - \epsilon$. The highest possible value for $h_0$ is $1 - \epsilon$. We show that $h_i \leq 1-\left(\frac{7}{4} \right)^{i}\cdot \epsilon$ provided that $1-\left(\frac{7}{4} \right)^{i-1}\cdot \epsilon  \geq 3/4$.

Suppose by induction that $h_{i-1} \leq 1 - \left(\frac{7}{4} \right)^{i-1}\cdot \epsilon$. We have \begin{align*}h_{i} &= h_{i-1}^2 \\
&\leq \left(1 - \left(\frac{7}{4} \right)^{i-1} \cdot \epsilon\right)^2 \\
&= 1 - \left(\frac{7}{4}\right)^{i-1} \cdot \epsilon \cdot \left(2 - \left(\frac{7}{4}\right)^{i-1} \cdot \epsilon \right) \\
&\leq 1 - \left(\frac{7}{4}\right)^{i-1} \cdot \epsilon \cdot \left(2 - \frac{1}{4} \right) 
\leq 1- \left(\frac{7}{4} \right)^i\cdot \epsilon && 1-\left(\frac{7}{4} \right)^{i-1}\cdot \epsilon  \geq 3/4
\end{align*}
Therefore, after $i_0 = \log_{7/4} (4/\epsilon)$ iterations we have, $h_{i_0} \leq 1 -(\frac{7}{4})^{i} \cdot \epsilon \leq 1 -\frac{1}{4} = \frac{3}{4}$. Since $h_{i_0+2} \leq (\frac{3}{4})^4 < \frac{1}{2} - \epsilon$ for $\epsilon < 1/8$, it must be the case $t \leq i_0 + 2 = O(\log(1/\epsilon))$.
\end{proof}

 Note that Line \ref{line:2tour1} and Line \ref{line:2tour2} are executed with probability 1 during the first $t-1$ iterations. We show the following.

\begin{lemma}\label{lem:first_rounds} For iteration $1 \leq i < t - 1$, $\E[\frac{|H_{i+1}|}{n} \mid H_{i}] = \frac{|H_i|^2}{n^2}$. \end{lemma}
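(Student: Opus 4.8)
The plan is to compute the conditional expectation of $|H_{i+1}|$ by a linearity-of-expectation argument over the individual nodes, conditioning on the configuration at the end of iteration $i$. During iterations $1 \le i < t-1$, the condition $h_i > T$ holds with enough room that $\delta = 1$ (this is exactly the claim in the sentence preceding the lemma, which follows from the definition of $\delta$ together with $h_i > h_{i+1} \ge T$ when we are not in the last two iterations), so every node executes Lines \ref{line:2tour1}--\ref{line:2tour2}: it samples two nodes $t_1(v), t_2(v)$ independently and uniformly at random (with replacement), and sets $x_v \leftarrow \min(x_{t_1(v)}, x_{t_2(v)})$.

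First I would fix the set $H_i$ (equivalently the vector of values after iteration $i$) and consider a single node $v$. After the update, $v$ lies in $H_{i+1}$ exactly when its new value $\min(x_{t_1(v)}, x_{t_2(v)})$ has rank above $(\phi+\epsilon)n$, i.e. when \emph{both} sampled nodes currently hold values whose rank is above $(\phi+\epsilon)n$ --- because the minimum of two values is a high-quantile value iff both of them are. (Here I am using that $H$ is defined as an up-set of quantiles, so "value in $H$" is equivalent to "rank $> (\phi+\epsilon)n$", and the minimum of two elements of $H$ is again in $H$ while the minimum of anything with a non-$H$ element is not in $H$.) Since $t_1(v)$ and $t_2(v)$ are independent and each is uniform over all $n$ nodes, $\Pr[t_j(v) \in H_i] = |H_i|/n$, so $\Pr[v \in H_{i+1} \mid H_i] = (|H_i|/n)^2$. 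Summing this over all $n$ nodes $v$ and using linearity of expectation gives $\E[|H_{i+1}| \mid H_i] = n \cdot (|H_i|/n)^2 = |H_i|^2/n$, hence $\E[|H_{i+1}|/n \mid H_i] = |H_i|^2/n^2$, as claimed.

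I do not expect a serious obstacle here; the only points needing care are (i) justifying that $\delta = 1$ throughout the range $1 \le i < t-1$ so that the two-sample rule is really in force (handled by the definition of $\delta$ and the stopping rule, already noted in the text), and (ii) being careful that sampling is with replacement, so that the event "$v \in H_{i+1}$" is genuinely the intersection of two independent events each of probability $|H_i|/n$ — if $t_1(v) = t_2(v)$ this is still fine since then the min is just that node's value, which is in $H_i$ with probability $|H_i|/n$, consistent with the squared formula. The membership of $v$ itself in $H_i$ or not is irrelevant since $v$ overwrites its value; this is why the recursion depends only on $|H_i|$ and not on finer structure. (Subsequent concentration of $|H_{i+1}|/n$ around $h_{i+1} = h_i^2$ is then obtained separately, not part of this lemma.)
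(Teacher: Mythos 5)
Your proof is correct and follows essentially the same route as the paper's: fix $H_i$, observe that during iterations $1\le i<t-1$ every node executes the two-sample rule, note that $v\in H_{i+1}$ iff both $t_1(v)$ and $t_2(v)$ lie in $H_i$ (probability $(|H_i|/n)^2$ by independence of the two uniform draws), and sum over $v$ by linearity of expectation. The extra remarks you add (why $\delta=1$, why $H$ is closed under $\min$, the with-replacement point) are sound clarifications of steps the paper leaves implicit.
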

\begin{proof}
	$$\E\left[\frac{|H_{i+1}|}{n} \mid H_{i}\right] = \frac{1}{n} \sum_{v \in V} \Pr(x_{t_1(v)} \in H_{i} \wedge x_{t_2(v)} \in H_{i} ) = \frac{1}{n} \sum_{v\in V} \left(\frac{|H_i|}{n}\right)^2 = \left( \frac{|H_i|^2}{n^2} \right) \qedhere $$
\end{proof}

Therefore, for $1 \leq i \leq t-1$, if $\frac{|H_i|}{n} \sim h_i$ ($\sim$ means they are close), then $\frac{|H_{i+1}|}{n} \sim h_{i+1}$ if $\frac{|H_{i+1}|}{n}$ concentrates around its expectation, since $h_{i+1} = h^2_{i}$.

In the last iteration, we truncate the probability of doing the tournament by doing it with only probability $\delta$ for each node, so that ideally we hope to have $\E[\frac{|H_t|}{n} | H_{t-1}] = T$. However, since $\delta$ is calculated with respect to $h_{t-1}$ instead of the actual $\frac{|H_{t-1}|}{n}$, we need the following lemma to bound the deviated expectation. In the next lemma, we show that if $\frac{|H_{t-1}|}{n} \sim h_{t-1}$, then indeed we have $\E[\frac{|H_t|}{n} | H_{t-1}] \sim T$. 

\begin{replemma}{last_round}\label{lem:last_round} Suppose that $(1-\epsilon'')h_{t-1} \leq \frac{|H_{t-1}|}{n} \leq (1+\epsilon'') h_{t-1}$ for some $0<\epsilon''<1$. We have $ T - 3\epsilon'' \leq \E\left[\frac{|H_{t}|}{n} \mid H_{t-1} \right]\leq  T  + 3\epsilon''$.
\end{replemma}
\begin{proof}
First, 

\begin{align*}
\E\left[\frac{|H_{t}|}{n} \mid H_{t-1}\right] &= \frac{1}{n}\sum_{v \in V} \left(\delta \cdot \frac{|H_{t-1}|^2}{n^2} + (1-\delta)\cdot \frac{|H_{t-1}|}{n} \right) \\
&= \frac{|H_{t-1}|}{n} \cdot \left((1-\delta) + \delta \cdot \frac{|H_{t-1}|}{n} \right)
\end{align*}
Suppose that $\frac{|H_{t-1}|}{n} \leq (1+\epsilon'') h_{t-1}$. We have
\begin{align*}
\E\left[\frac{|H_{t}|}{n} \mid H_{t-1}\right] &\leq (1+\epsilon'') h_{t-1} \cdot \left((1-\delta) + \delta \cdot (1+\epsilon'') h_{t-1} \right) \\
&=(1+\epsilon'') h_{t-1} \cdot \left(\frac{T\cdot(1-h_{t-1}) - \epsilon'' h_{t-1}T + \epsilon'' h_{t-1}^2}{h_{t-1} - h_t} \right) \\
&=(1+\epsilon'') \cdot \left(\frac{T\cdot(h_{t-1}-h^2_{t-1})}{h_{t-1} - h_t}  + h_{t-1} \cdot \frac{\epsilon'' h_{t-1}^2 - \epsilon'' h_{t-1}T}{h_{t-1} - h_t} \right) \\
&= (1+\epsilon'') \cdot \left(T  + \epsilon'' h^2_{t-1} \cdot \frac{(h_{t-1} - T)}{h_{t-1} - h_t} \right) \\
&\leq (1+\epsilon'') \cdot \left(T  + \epsilon'' h_{t} \right) \leq T+3\epsilon''
\end{align*}

Similarly, if $\frac{|H_{t-1}|}{n} \geq (1-\epsilon'') h_{t-1}$, we have 
\begin{align*}
\E\left[\frac{|H_{t}|}{n} \mid H_{t-1}\right] &\geq (1-\epsilon'') h_{t-1} \cdot \left((1-\delta) + \delta \cdot (1-\epsilon'') h_{t-1} \right) \\
&=(1-\epsilon'') h_{t-1} \cdot \left(\frac{T\cdot(1-h_{t-1}) + \epsilon'' h_{t-1}T - \epsilon'' h_{t-1}^2}{h_{t-1} - h_t} \right) \\
&=(1-\epsilon'') \cdot \left(\frac{T\cdot(h_{t-1}-h^2_{t-1})}{h_{t-1} - h_t}  - h_{t-1} \cdot \frac{\epsilon'' h_{t-1}^2 - \epsilon'' h_{t-1}T}{h_{t-1} - h_t} \right) \\
&= (1-\epsilon'') \cdot \left(T  - \epsilon'' h^2_{t-1} \cdot \frac{(h_{t-1} - T)}{h_{t-1} - h_t} \right) \\
&\geq (1-\epsilon'') \cdot \left(T  - \epsilon'' h_{t} \right) \geq T-3\epsilon''				 \qedhere
\end{align*}
\end{proof}

In the end, we hope that $\frac{|H_{t}|}{n}$ deviates from $T$ by at most $\epsilon/2$. To achieve this, we show that in each iteration $1 \leq i \leq t-1$, $\frac{|H_{i}|}{n}$ deviates from its expectation, $\frac{|H_{i-1}|^2}{n^2}$, by at most a $(1\pm \epsilon')$ factor, where $\epsilon' = \epsilon/2^{t+4}$ is an error control parameter that is much less than $\epsilon$. Note that $\frac{|H_{i-1}|^2}{n^2}$ is already deviated from $h^2_{i-1} = h_i$ to some degree. The next lemma bounds the cumulative deviation of $\frac{|H_i|}{n}$ from $h_i$. Note that $\epsilon'$ has to be large enough in order guarantee that $\frac{|H_{i}|}{n}$ lies in the $(1\pm \epsilon') \cdot \frac{|H_{i-1}|^2}{n^2}$ range. This also implies $\epsilon$ has to be large enough. 

\begin{replemma}{first_phase_concentrate}\label{lem:first_phase_concentrate} Let $\epsilon = \Omega(1/n^{1/4.47})$ and $\epsilon' = \frac{\epsilon}{2^{t+4}}$. W.h.p.~for iteration $0 \leq i < t$, we have $\frac{|H_i|}{n} \in (1 \pm \epsilon')^{2^{i+1}-1} \cdot h_i$. 
\end{replemma}

\begin{proof}
We will show by induction. Initially, we have $$(1-\epsilon')h_0 \leq (1-1/n)\cdot h_0\leq \frac{|H_0|}{n} \leq (1+1/n)h_0 \leq (1+\epsilon') h_0$$

Suppose that $\frac{|H_i|}{n} \in (1 \pm \epsilon)^{2^{i+1}-1} \cdot h_i$ is true. Then by Lemma \ref{lem:first_rounds}, 
\begin{align}
\E[|H_{i+1}| \mid H_{i}] &= \frac{|H_i|^2}{n^2} \nonumber \\
&\geq (1 - \epsilon')^{2^{i+2}-2} \cdot h^2_i \nonumber  \\
&\geq (1-\epsilon' \cdot (2^{t+2} -2)) \cdot h^2_i && (1-a)^b \geq 1-ab\nonumber \\
&\geq (1-\epsilon) \cdot h^2_i = \Omega(1) && \epsilon' = \frac{\epsilon}{2^{t+4}}, h_i \geq 1/2-\epsilon \label{eqn:1}
\end{align}

Also, \begin{equation}\label{eqn:2}\epsilon'  = \epsilon \cdot 2^{\log_{7/4}(\epsilon/4) - 2}/16\geq \epsilon^{2.24}/256=\Omega(\sqrt{\log n / n})\end{equation}

Therefore, by (\ref{eqn:1}), (\ref{eqn:2}), and Chernoff Bound, we have $$\Pr\left(\frac{|H_{i+1}|}{n} \in (1\pm\epsilon')\cdot \E[\frac{|H_{i+1}|}{n}\mid H_{i}]\right) \leq 1-2\cdot e^{-\Omega(\epsilon'^2 \E[|H_{i+1}| \mid H_i])} \leq 1 - 2\cdot e^{-\Omega(\frac{\log n}{n} \cdot n)} = 1-1/\poly(n) $$

Thus, w.h.p.~$\frac{|H_{i+1}|}{n} \in (1\pm\epsilon')\cdot \E[\frac{|H_{i+1}|}{n}\mid H_{i}]$.
\begin{align*}
\frac{|H_{i+1}|}{n} &\leq (1+\epsilon') \cdot \left(\frac{|H_i|}{n}\right)^2 \\
&\leq  (1+\epsilon') \cdot \left((1+\epsilon')^{2^{i+1}-1} \cdot h_i\right)^2 \\
&= (1+\epsilon')^{2^{i+2}-1} \cdot h_{i+1}
\end{align*}
Similarly, we can show $\frac{|H_i|}{n} \geq  (1-\epsilon')^{2^{i+2}-1} \cdot h_{i+1}$. Therefore, we conclude for $0 \leq i < t$, $\frac{|H_i|}{n} \in (1 \pm \epsilon')^{2^{i+1}-1} \cdot h_i$.
\end{proof}

Combining Lemma \ref{lem:last_round} for the deviation on the last round and Lemma \ref{lem:first_phase_concentrate} for the deviation on first $t-1$ rounds, the following lemma summarizes the final deviation. 
\begin{replemma}{end_H_bound} Let $\epsilon = \Omega(1/n^{1/4.47})$. At the end of the algorithm, w.h.p. $T - \frac{\epsilon}{2} \leq \frac{|H_t|}{n} \leq T + \frac{\epsilon}{2} $. \end{replemma}

\begin{proof}

Let $\epsilon' = \frac{\epsilon}{2^{t+4}}$. By Lemma \ref{lem:first_phase_concentrate}, we have $\frac{|H_{t-1}|}{n} \in (1 \pm \epsilon')^{2^{t} - 1} \cdot h_{t-1}$ w.h.p.

Note that $(1+\epsilon')^{2^{t} - 1} \leq 1+2({2^{t} - 1})\epsilon'\leq 1+2^{t+1}\epsilon'$, since $(1+a)^b \leq 1+2ab$ provided $0 < ab \leq 1/2$ and we have $({2^{t} - 1})\epsilon' \leq \epsilon/16 \leq 1/2$.
Similarly, $(1-\epsilon')^{2^{t} - 1} \geq 1-2^{t+1}\epsilon'$. Therefore,  we can let $\epsilon'' = 2^{t+1}\epsilon'$ and apply Lemma \ref{lem:last_round} to conclude that $$T - 3 \cdot 2^{t+1}\epsilon' \leq T - 3\epsilon'' \leq \E\left[\frac{|H_{t}|}{n} \mid H_{t-1}\right]\leq T - 3\epsilon''\leq  T  +  3\cdot 2^{t+1}\epsilon'$$

By Lemma \ref{lem:first_phase_concentrate}, (\ref{eqn:1}), and (\ref{eqn:2}), we have $\frac{|H_{t-1}|}{n} = \Omega(1)$ and $\epsilon' = \Omega(\sqrt{\log n / n})$. Therefore, we can apply Chernoff Bound to show w.h.p.~$|H_t| \in (1\pm \epsilon)\E[|H_t| \mid H_{t-1}]$. Thus,
\begin{align*}
(1-\epsilon')(T - 3 \cdot 2^{t+1}\epsilon' ) &\leq \frac{|H_{t}|}{n} \leq  (1+\epsilon')(T  +  3 \cdot 2^{t+1}\epsilon' ) \\
T - 4 \cdot 2^{t+1}\epsilon' &\leq \frac{|H_{t}|}{n} \leq  T  +   4 \cdot 2^{t+1}\epsilon' \\
T - \frac{\epsilon}{2} &\leq \frac{|H_{t}|}{n} \leq T + \frac{\epsilon}{2} && \epsilon' =  \epsilon/ (2^{t+4}) \qedhere
\end{align*}
\end{proof}

Initially, $\frac{|M_0|}{n} \geq 2\epsilon$. The following three lemmas are for showing Lemma $\ref{lem:M_lower_bound}$, which states that $\frac{|M_t|}{n}$ does not decrease much from $2\epsilon$. This is done by showing that $\frac{|M_i|}{n}$ does not decrease in each round, except possibly the last. We first derive bounds on the expectation of $\frac{|M_{i+1}|}{n}$ in terms $\frac{|M_i|}{n}$ in each round.

\begin{lemma}\label{lem:M_first_rounds} For $0 \leq i < t-1$, suppose that $\frac{|H_i|}{n} \geq 1/2$ and  $\frac{|M_i|}{n} \geq \frac{|M_0|}{n}$, $\E[\frac{|M_{i+1}|}{n} \mid M_i, H_i] \geq \frac{|M_{i}|}{n}\cdot (1 + 2\epsilon)$.\end{lemma}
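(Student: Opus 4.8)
The plan is to compute (or lower-bound) $\E[\tfrac{|M_{i+1}|}{n} \mid M_i, H_i]$ by fixing a node $v$ and tracking the probability that, after running Line~\ref{line:2tour1}--\ref{line:2tour2} of the \texttt{2-TOURNAMENT} step, the new value $x_v = \min(x_{t_1(v)}, x_{t_2(v)})$ lands in $M_i$. A node $v$ ends up in $M_{i+1}$ precisely when both sampled values are in $M_i \cup H_i$ and at least one of them is in $M_i$ (because the minimum of two values from $M_i \cup H_i$ is in $M_i$ iff not both come from $H_i$; here I use that every value in $M_i$ is less than every value in $H_i$, and that $L_i$-values are smaller still so any sample hitting $L_i$ pushes the min out of $M_i$). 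So, writing $m = |M_i|/n$ and $h = |H_i|/n$,
\[
\Pr(x_v \in M_{i+1}) = (m+h)^2 - h^2 = m^2 + 2mh = m(m + 2h).
\]
Summing over the $n$ nodes and dividing by $n$ gives $\E[\tfrac{|M_{i+1}|}{n}\mid M_i,H_i] = m(m+2h)$.

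Now I plug in the hypotheses. Since $h = |H_i|/n \ge 1/2$, we get $m + 2h \ge m + 1 \ge 1$, hence $m(m+2h) \ge m(1 + m)$... but this only gives a factor $(1+m)$, and we want $(1+2\epsilon)$. To get the stated bound I use $m = |M_i|/n \ge |M_0|/n \ge 2\epsilon$ together with $h \ge 1/2$: then $m + 2h \ge 2\epsilon + 1$, so
\[
\E\!\left[\tfrac{|M_{i+1}|}{n}\mid M_i,H_i\right] = m(m+2h) \ge \tfrac{|M_i|}{n}\cdot(1 + 2\epsilon),
\]
which is exactly the claim. (One should double check the edge cases $m+h \le 1$ is automatic since $L_i,M_i,H_i$ partition the nodes, and $h^2$ is subtracted correctly; this is the only place the combinatorics could go wrong.)

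The main obstacle — really the only subtlety — is getting the case analysis for "$x_v \in M_{i+1}$" exactly right: one must argue that $\min(x_{t_1(v)},x_{t_2(v)})$ has rank in $[\phi-\epsilon,\phi+\epsilon]$ \emph{if and only if} neither sample is an $L_i$-value and at least one is an $M_i$-value, which relies on the order structure $L_i < M_i < H_i$ (values, hence ranks, are totally ordered and the three sets are contiguous blocks of ranks). Subject to that, the rest is the one-line expectation computation above followed by substituting $|H_i|/n \ge 1/2$ and $|M_i|/n \ge |M_0|/n \ge 2\epsilon$. No concentration is needed here since the lemma only concerns the conditional expectation; the high-probability statements are deferred to the subsequent lemma (the analogue of Lemma~\ref{lem:first_phase_concentrate} for $M$).
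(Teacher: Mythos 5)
Your proof is correct and matches the paper's argument essentially line for line: both compute $\E[\,|M_{i+1}|/n \mid M_i, H_i\,] = 2mh + m^2 = m(2h+m)$ from the observation that $\min(x_{t_1(v)}, x_{t_2(v)}) \in M_i$ iff neither sample is in $L_i$ and at least one is in $M_i$, and both then substitute $h \geq 1/2$ and $m \geq |M_0|/n \geq 2\epsilon$ to conclude $m(2h+m) \geq m(1+2\epsilon)$. (A small remark on your exposition: after obtaining $m(m+2h) \geq m(1+m)$ you say this ``only gives a factor $(1+m)$'' as though that were insufficient, but it already suffices since $m \geq 2\epsilon$ gives $1+m \geq 1+2\epsilon$ directly; the extra step does no harm.)
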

\begin{proof}
 Therefore, we have
\begin{align*}
\frac{\E[|M_{i+1}| \mid M_i, H_i]}{n} &= \frac{1}{n} \cdot \sum_{v \in V} \left( 2 \cdot \frac{|M_i|}{n} \cdot \frac{|H_i|}{n} + \left(\frac{|M_i|}{n} \right)^2 \right) \\
&= \frac{1}{n} \cdot \sum_{v \in V} \left( \frac{|M_i|}{n} \left(2 \cdot \frac{|H_i|}{n} + \frac{|M_i|}{n}  \right) \right) \\
&\geq \frac{1}{n} \cdot \sum_{v \in V} \frac{|M_i|}{n}\cdot (1+2\epsilon) && \frac{|H_i|}{n} \geq 1/2 , |M_i| \geq |M_0| \geq 2\epsilon n \\
&\geq \frac{|M_i|}{n} \cdot (1+2\epsilon) \qedhere
\end{align*}
\end{proof}

\begin{lemma}\label{lem:M_last_round} Suppose that $\frac{|H_{t-1}|}{n} \geq \frac{1}{2}-\frac{3\epsilon}{2}$ and $\frac{|M_{t-1}|}{n} \geq \frac{|M_0|}{n}$, $\E[\frac{|M_{t}|}{n} \mid M_{t-1}, H_{t-1}] \geq (1-\epsilon)\cdot\frac{|M_{t-1}|}{n}$.\end{lemma}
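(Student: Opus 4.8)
The plan is to write $\E[|M_t|/n \mid M_{t-1}, H_{t-1}]$ explicitly by analysing the two branches of the last iteration of Algorithm~\ref{algo:shift}, and then reduce the claim to an elementary inequality on $2|H_{t-1}|/n + |M_{t-1}|/n$. Fix a node $v$ and condition on $M_{t-1}, H_{t-1}$. With probability $1-\delta$ node $v$ copies the value of one uniformly random node (lines~\ref{line:1tour1}--\ref{line:1tour2}), so its new value lands in $M$ with probability $|M_{t-1}|/n$. With probability $\delta$ node $v$ takes $\min(x_{t_1(v)}, x_{t_2(v)})$ of two independent uniform samples; since the classes are ordered $L < M < H$, this minimum lies in $M$ exactly when neither sample lies in $L$ and at least one lies in $M$, i.e.\ with probability $\bigl(\tfrac{|M_{t-1}|+|H_{t-1}|}{n}\bigr)^2 - \bigl(\tfrac{|H_{t-1}|}{n}\bigr)^2 = 2\tfrac{|M_{t-1}|}{n}\tfrac{|H_{t-1}|}{n} + \bigl(\tfrac{|M_{t-1}|}{n}\bigr)^2$. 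Summing the per-node probabilities over $v \in V$ and dividing by $n$ yields
$$\E\left[\frac{|M_t|}{n} \mid M_{t-1}, H_{t-1}\right] = \frac{|M_{t-1}|}{n}\left(\delta\left(2\frac{|H_{t-1}|}{n} + \frac{|M_{t-1}|}{n}\right) + (1-\delta)\right).$$

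It then suffices to show the parenthesized factor is at least $1-\epsilon$, equivalently $\delta\bigl(2\tfrac{|H_{t-1}|}{n} + \tfrac{|M_{t-1}|}{n} - 1\bigr) \ge -\epsilon$. Plugging in the hypotheses $\tfrac{|H_{t-1}|}{n} \ge \tfrac12 - \tfrac{3\epsilon}{2}$ and $\tfrac{|M_{t-1}|}{n} \ge \tfrac{|M_0|}{n} \ge 2\epsilon$ gives $2\tfrac{|H_{t-1}|}{n} + \tfrac{|M_{t-1}|}{n} - 1 \ge (1-3\epsilon) + 2\epsilon - 1 = -\epsilon$. Since $\delta \in [0,1]$, multiplying this quantity by $\delta$ keeps it $\ge -\epsilon$: if it is nonnegative the product is $\ge 0$, and if it is negative then scaling by $\delta \le 1$ only increases it. That finishes the lemma.

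The computation is entirely bookkeeping, so there is no genuinely hard step. The two points needing care are (i) getting the probability that $\min(x_{t_1(v)}, x_{t_2(v)})$ falls in $M$ right, via the correct inclusion--exclusion against the ordering $L < M < H$ (it is $\Pr[\text{both in } M\cup H] - \Pr[\text{both in } H]$), and (ii) not trying to bound $\delta$ itself (which may be close to $1$) but instead observing that the term $2|H_{t-1}|/n + |M_{t-1}|/n - 1$ is already $\ge -\epsilon$, so any $\delta \le 1$ is harmless. It is also worth flagging, for context, why only a $(1-\epsilon)$ factor rather than the true monotonicity of Lemma~\ref{lem:M_first_rounds} is claimed: in the final iteration $\delta$ can be strictly below $1$ and $|H_{t-1}|/n$ can have slipped slightly under $1/2$, and together these cost exactly this small multiplicative slack; combined with Lemma~\ref{lem:M_first_rounds} this will give the intended lower bound of roughly $2\epsilon$ on $|M_t|/n$ up to lower-order losses.
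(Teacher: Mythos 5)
Your proof is correct and follows essentially the same route as the paper: writing the conditional expectation as $\frac{|M_{t-1}|}{n}\bigl(\delta(2\frac{|H_{t-1}|}{n}+\frac{|M_{t-1}|}{n})+(1-\delta)\bigr)$ and then lower-bounding $2\frac{|H_{t-1}|}{n}+\frac{|M_{t-1}|}{n}\ge 1-\epsilon$ from the hypotheses, after which $\delta\in[0,1]$ gives the claimed $(1-\epsilon)$ factor. The only cosmetic difference is that you derive the per-node probability that $\min(x_{t_1(v)},x_{t_2(v)})\in M$ by inclusion--exclusion from scratch, whereas the paper implicitly reuses the identical computation from Lemma~\ref{lem:M_first_rounds}.
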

\begin{proof}
\begin{align*}
\frac{\E[|M_{t}| \mid M_{t-1}, H_{t-1}]}{n} &= \frac{1}{n} \cdot \sum_{v \in V} \left( \delta \cdot \left( 2 \cdot \frac{|M_{t-1}|}{n} \cdot \frac{|H_{t-1}|}{n} + \left(\frac{|M_{t-1}|}{n} \right)^2\right) + (1-\delta) \cdot \frac{|M_{t-1}|}{n} \right) \\
&= \delta \cdot  \left( \frac{|M_{t-1}|}{n} \left(2 \cdot \frac{|H_{t-1}|}{n} + \frac{|M_{t-1}|}{n}  \right) \right) + (1-\delta)\cdot \frac{|M_{t-1}|}{n} \\
&\geq \delta \cdot \frac{|M_{t-1}|}{n}\cdot (1-\epsilon) + (1-\delta) \cdot \frac{|M_{t-1}|}{n} \hspace{8mm}   \frac{1}{2}-\frac{3\epsilon}{2}, \frac{|M_{t-1}|}{n} \geq \frac{|M_0|}{n} \geq 2\epsilon  \\
&\geq (1-\epsilon)\cdot \frac{|M_{t-1}|}{n}
\qedhere
\end{align*}
\end{proof}

\begin{lemma}\label{lem:M_concentrate} For $0 \leq i \leq t-1$, suppose that $\frac{|M_i|}{n} \geq \frac{|M_0|}{n}$, $\Pr(\frac{|M_{i+1}|}{n} \in (1\pm\epsilon')\cdot \E[\frac{|M_{i+1}|}{n}\mid M_{i}, H_{i}] \leq 1-2\cdot e^{-\Omega(\epsilon'^2 \E[|M_{i+1}| \mid M_i, H_i])}$. \end{lemma}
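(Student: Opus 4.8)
The plan is to view $|M_{i+1}|$, conditioned on the configuration at the end of iteration $i$, as a sum of independent indicator random variables and then apply a standard multiplicative Chernoff bound.

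First I would condition on the sets $M_i$ and $H_i$ (equivalently on all three sets $L_i, M_i, H_i$, since $L_i = V \setminus (M_i \cup H_i)$). For each node $v$, let $X_v$ be the indicator of the event that after executing iteration $i+1$ the value held by $v$ has quantile in $[\phi-\epsilon,\phi+\epsilon]$, i.e.\ $v \in M_{i+1}$; then $|M_{i+1}| = \sum_{v\in V} X_v$. The key observation is that the random choices made by distinct nodes in iteration $i+1$ --- the sampled nodes $t_1(v), t_2(v)$ and, in the last iteration, the private $\delta$-biased coin of $v$ --- are mutually independent across $v$, and whether $X_v = 1$ is a deterministic function of $v$'s own choices together with the already-fixed sets $L_i, M_i, H_i$ (a sampled node $t_j(v)$ contributes a value that is ``high'', ``middle'', or ``low'' according to which of the three fixed sets it lies in). Hence, conditioned on $M_i, H_i$, the variables $\{X_v\}_{v\in V}$ are mutually independent Bernoulli random variables.

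Next I would set $\mu \defeq \E[|M_{i+1}| \mid M_i, H_i] = \sum_{v\in V} \E[X_v \mid M_i, H_i]$; this is exactly the quantity written out inside the proofs of Lemma~\ref{lem:M_first_rounds} (for $i < t-1$) and Lemma~\ref{lem:M_last_round} (for $i = t-1$). Since $\epsilon' < 1$, the two-sided multiplicative Chernoff bound for a sum of independent $[0,1]$-valued random variables yields, conditioned on $M_i, H_i$,
\[
\Pr\!\left( \bigl| |M_{i+1}| - \mu \bigr| > \epsilon' \mu \right) \;\leq\; 2\exp\!\left(-\epsilon'^2 \mu / 3\right) \;=\; 2\, e^{-\Omega(\epsilon'^2\, \E[|M_{i+1}| \mid M_i, H_i])}.
\]
Dividing through by $n$ inside the probability turns this into the statement of the lemma.

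There is essentially no substantive obstacle here; the only points needing a little care are (i) confirming that the per-node coin flip used in the last iteration, with the deterministic probability $\delta$ computed from $h_{t-1}$, does not break independence across nodes --- it does not, since each node flips its own coin --- and (ii) noting that the hypothesis $\tfrac{|M_i|}{n} \ge \tfrac{|M_0|}{n}$ is not actually needed for the concentration statement itself, which holds for any value of $\mu$; it is used only downstream (in Lemma~\ref{lem:M_lower_bound}) to guarantee that $\mu$ is large enough for the exponent $\epsilon'^2\mu$ to be $\Omega(\mathrm{poly}\log n)$ and hence the bound to be meaningful.
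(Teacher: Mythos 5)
Your proof is correct and follows essentially the same route as the paper: condition on $M_i,H_i$, observe that the events $\{v\in M_{i+1}\}$ are mutually independent across $v$ (each depending only on $v$'s own samples and, in the last iteration, $v$'s own coin), and apply a two-sided multiplicative Chernoff bound. Your side remark that the hypothesis $\frac{|M_i|}{n}\geq\frac{|M_0|}{n}$ is not actually used in the concentration step itself, but only downstream to make the exponent large, is also accurate.
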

\begin{proof}
Since the events that $v \in M_{i+1}$ are independent for all $v$,  by Chernoff Bound, 
\begin{align*}
\Pr\left(\frac{|M_{i+1}|}{n} > (1+\epsilon')\cdot \frac{\E[|M_{i+1}| \mid M_i,H_i]}{n}\right) &\leq e^{-\Omega(\epsilon'^2 \E[|M_{i+1} |\mid M_i, H_{i}])} 
\\
\Pr\left(\frac{|M_{i+1}|}{n} < (1-\epsilon')\cdot \frac{\E[|M_{i+1}| \mid M_i, H_i]}{n} \right) &\leq e^{-\Omega(\epsilon'^2 \E[|M_{i+1} |\mid M_i, H_{i}])} \qedhere 
\end{align*}
\end{proof}

\begin{replemma}{M_lower_bound}\label{lem:M_lower_bound} Let $\epsilon = \Omega(1/n^{1/4.47})$. At the end of the algorithm, w.h.p.~$\frac{|M_t|}{n} \geq \frac{7\epsilon}{4}$. \end{replemma}
\begin{proof}
First note that for $0 \leq i \leq t-2$, w.h.p.~we must have $\frac{|H_i|}{n} \geq 1/2$. Otherwise, by Lemma \ref{lem:first_rounds} and Chernoff Bound with error $\epsilon'=1/3$, w.h.p.~we have $\frac{|H_{i+1}|}{n} \leq (1+\epsilon')(1/2)^2 \leq 1/2 - \epsilon$. This implies $t=i+1$, a contradiction occurs.

We will show by induction that $\frac{|M_i|}{n} \geq \frac{|M_0|}{n}$ for $0 \leq i \leq t-1$ first. Initially, $\frac{|M_0|}{n} \geq \frac{|M_0|}{n}$ is true. Suppose that $\frac{|M_i|}{n} \geq \frac{|M_0|}{n}$ is true.

Let $\epsilon' = \epsilon / 4$. By Lemma \ref{lem:M_concentrate} and Lemma \ref{lem:M_first_rounds}, we have 
$\Pr\left(\frac{|M_{i+1}|}{n} < (1-\epsilon')\cdot \frac{\E[|M_{i+1}| \mid M_i, H_i]}{n} \right) \leq e^{-\Omega(\epsilon'^2 \E[|M_{i+1} |\mid M_i, H_{i}])} =  \leq e^{-\Omega(\epsilon'^2 \epsilon n)} = 1/\poly(n)$. Therefore, w.h.p.~$\frac{|M_{i+1}|}{n} \geq (1-\epsilon')\cdot \frac{\E[|M_{i+1}| \mid M_i, H_i]}{n} \geq (1-\epsilon/4)\cdot(1 + 2\epsilon)\cdot \frac{|M_i|}{n} \geq \frac{|M_{i}|}{n}$. From this statement, we can also obtain that in the end of round $t-1$, $$\frac{|M_{t-1}|}{n} \geq (1 - \frac{\epsilon}{4})\cdot (1+2\epsilon)\cdot \frac{|M_{t-2}|}{n} \geq  (1 - \frac{\epsilon}{4})\cdot (1+2\epsilon)\cdot \frac{|M_{0}|}{n} $$

Also let $\epsilon'' = \epsilon/2^{t+4}$. By Lemma \ref{lem:first_phase_concentrate}, w.h.p.~we have $$\frac{|H_{t-1}|}{n} \geq (1-\epsilon'')^{2^{i+1} - 1} \cdot h_{t-1} \geq  (1- 2\cdot (2^{i+1} - 1)\epsilon'')\cdot h_{t-1} \geq   h_{t-1} \cdot (1-\epsilon/2) \geq  T -\frac{\epsilon}{2} \geq \frac{1}{2} - \frac{3\epsilon}{2}$$ 

This satisfy the conditions required for Lemma \ref{lem:M_last_round}. By Lemma \ref{lem:M_concentrate} and Lemma \ref{lem:M_last_round}, we have w.h.p.~
\begin{align*}
\frac{|M_{t}|}{n} &\geq (1-\epsilon')\cdot \frac{\E[|M_{t}| \mid M_{t-1}, H_{t-1}]}{n} \\
& \geq (1-\epsilon')\cdot (1 - \epsilon)\cdot \frac{|M_{t-1}|}{n} \\
& \geq (1-\epsilon/4)^2 \cdot (1+\epsilon) \cdot \frac{|M_{0}|}{n} \\
& \geq (1-\epsilon) \cdot (1+\epsilon) \cdot \frac{|M_{0}|}{n} && \mbox{$(1-\epsilon/4)^2 \geq 1-\epsilon$ for $\epsilon \leq 1/2$.}\\
& \geq (1-\epsilon^2) \cdot \frac{|M_{0}|}{n} \\
& \geq \frac{7\epsilon}{4} && \mbox{$\epsilon < 1/8$ and $\frac{|M_0|}{n} = 2\epsilon$} \qedhere
\end{align*}
\end{proof}

The following lemma shows that at the end of Algorithm \ref{algo:shift}, the problem has been reduced to finding the approximate median.

\begin{lemma}\label{lem:last} Let $\epsilon = \Omega(1/n^{1/4.47})$. At the end of iteration $t$ of Algorithm \ref{algo:shift}, w.h.p.~any $\phi'$-quantile where $\phi' \in [\frac{1}{2} -\frac{\epsilon}{4}, \frac{1}{2} + \frac{\epsilon}{4}]$ must be in $M_t$. \end{lemma}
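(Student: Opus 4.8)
The plan is to combine the two-sided bound on $|H_t|$ from Lemma~\ref{end_H_bound} with the lower bound on $|M_t|$ from Lemma~\ref{lem:M_lower_bound}, together with the symmetric statement for $|L_t|$, to pin down the rank interval occupied by the nodes in $M_t$. Recall that $L_t, M_t, H_t$ partition the $n$ nodes by their (current) value, with $M_t$ holding the values coming from the target band and $H_t$ the high side. First I would observe that $|L_t| = n - |M_t| - |H_t|$, so by Lemma~\ref{end_H_bound} we have $\frac{|H_t|}{n} \le T + \frac{\epsilon}{2} = \frac12 - \frac{\epsilon}{2}$ w.h.p., and by Lemma~\ref{lem:M_lower_bound} we have $\frac{|M_t|}{n} \ge \frac{7\epsilon}{4}$ w.h.p.; combining, $\frac{|L_t|}{n} \ge 1 - (\frac12 - \frac{\epsilon}{2}) - \text{(something)}$. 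The cleaner route is to note that symmetrically the algorithm's invariants give $\frac{|L_t|}{n} \le \frac12 - \frac{\epsilon}{2}$ as well (the roles of $H$ and $L$ are interchangeable in the $h_0 \ge l_0$ analysis — the low side only shrinks less, so it too stays below $T + \frac{\epsilon}{2}$, and in any case $|L_t| = n - |M_t| - |H_t| \le n - \frac{7\epsilon}{4}n - (T-\frac{\epsilon}{2})n$, which is at most $\frac12 + \epsilon$ or so; I'd just need the bound $\frac{|L_t|}{n} \le \frac12 - \frac{\epsilon}{4}$, which follows from $|H_t| \ge (T - \frac{\epsilon}{2})n$ and $|M_t| \ge \frac{7\epsilon}{4}n$ since then $|L_t| \le n - (\frac12 - \frac{3\epsilon}{2})n - \frac{7\epsilon}{4}n = (\frac12 - \frac{\epsilon}{4})n$).

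Next, I would translate these cardinality bounds into a statement about ranks. Order all $n$ current values; the elements of $L_t$ occupy the lowest $|L_t|$ ranks, the elements of $M_t$ the next $|M_t|$ ranks, and $H_t$ the top $|H_t|$ ranks (ties can be broken consistently, or one argues with rank intervals). Hence any value in $M_t$ has rank strictly greater than $|L_t|$ and at most $|L_t| + |M_t| = n - |H_t|$. So the set of ranks achieved by $M_t$ contains the interval $\big(|L_t|,\ n-|H_t|\big]$. Plugging in $\frac{|L_t|}{n} \le \frac12 - \frac{\epsilon}{4}$ and $\frac{|H_t|}{n} \le \frac12 - \frac{\epsilon}{2}$ gives that this interval contains $\big(\lfloor(\frac12 - \frac{\epsilon}{4})n\rfloor,\ \lceil(\frac12 + \frac{\epsilon}{2})n\rceil\big]$, which in particular contains every $\phi' n$ with $\phi' \in [\frac12 - \frac{\epsilon}{4}, \frac12 + \frac{\epsilon}{4}]$ (using $\epsilon > 10\log n / n$ so that the $\pm O(1)$ rounding is absorbed). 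Therefore a $\phi'$-quantile of the current values — i.e. the $\lceil \phi' n\rceil$-th smallest — is the value held by some node in $M_t$, which by definition of $M$ has original rank in $[(\phi-\epsilon)n, (\phi+\epsilon)n]$.

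The one subtlety I would be careful about — and the closest thing to an obstacle — is the bookkeeping of which side is "large." The lemmas as stated were proved under the assumption $h_0 \ge l_0$, i.e. the high side starts larger, so it is the $H$-side that gets shrunk by \texttt{2-TOURNAMENT}. In that regime one must double-check that $|L_t|$ does not itself blow past $\frac12$: since on the low side the process is (weakly) the identity-plus-min dynamics that can only *decrease* $|L|$ relative to its share, and $|M_t|$ is bounded below by Lemma~\ref{lem:M_lower_bound}, the identity $|L_t| = n - |M_t| - |H_t|$ with the lower bound $|H_t| \ge (T-\frac{\epsilon}{2})n$ does all the work, so no separate martingale argument for $L$ is needed. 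In the symmetric case $l_0 > h_0$ the exact same argument runs with $H$ and $L$ swapped. Everything holds w.h.p. by a union bound over the (constantly many, since $t = O(\log\frac1\epsilon)$) bad events in Lemmas~\ref{end_H_bound} and~\ref{lem:M_lower_bound}, and the hypothesis $\epsilon = \Omega(1/n^{1/4.47})$ is exactly what those lemmas require.
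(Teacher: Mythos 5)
Your proposal is correct and is essentially the same argument as the paper's: the paper phrases the key inequality as $\frac{|M_t|}{n} + \frac{|H_t|}{n} \ge \frac12 + \frac{\epsilon}{4}$ (combining $\frac{|H_t|}{n} \ge \frac12 - \frac{3\epsilon}{2}$ from Lemma~\ref{end_H_bound} with $\frac{|M_t|}{n} \ge \frac{7\epsilon}{4}$ from Lemma~\ref{lem:M_lower_bound}), while you phrase it equivalently as $\frac{|L_t|}{n} \le \frac12 - \frac{\epsilon}{4}$ via $|L_t| = n - |M_t| - |H_t|$, and both then use $\frac{|H_t|}{n} \le \frac12 - \frac{\epsilon}{2}$ to cap the quantiles in $M_t$ from above. (Your passing remark that "$L$ and $H$ are interchangeable so $|L_t| \le (\frac12-\frac{\epsilon}{2})n$ too" is not quite right — in the $h_0 \ge l_0$ branch the min-dynamics only shrink $H$ and can grow $L$ — but you immediately discard that route in favor of the identity-based bound, which is exactly what the paper does, so this is harmless.)
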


\begin{proof}


Since $\frac{1}{2} - \frac{3\epsilon}{2} \leq \frac{|H_t|}{n} \leq \frac{1}{2} - \frac{\epsilon}{2}$ and $\frac{|M_t|}{n} \geq \frac{7\epsilon}{4}$ by Lemma \ref{first_phase_concentrate} and Lemma \ref{lem:M_lower_bound}, we have $\frac{|M_t|}{n} +  \frac{|H_t|}{n} \geq \frac{1}{2} + \frac{\epsilon}{4}$. Combined with the fact that $\frac{|H_t|}{n} \leq \frac{1}{2} - \frac{\epsilon}{2}$, we conclude that any $\phi'$-quantile where $\phi' \in [\frac{1}{2} - \frac{\epsilon}{4}, \frac{1}{2} + \frac{\epsilon}{4}]$ must  be in $M_t$. \end{proof}

\subsection{Phase II: Approximating the Median}\label{section:second_phase}
Let $L_i, M_i$, and $H_i$ denote the nodes whose quantiles lie in $[0, \frac{1}{2} - \epsilon)$, $[\frac{1}{2} - \epsilon, \frac{1}{2} + \epsilon]$, and $(\frac{1}{2} - \epsilon, 1]$ respectively at the end of iteration $i$, and $L_0, M_0$, and $H_0$ be the nodes with those quantiles in the beginning. Note that $L_i$ and $H_i$ are the nodes whose values are not our targets. We will show the quantities of $\frac{|L_i|}{n}$ and $\frac{|H_i|}{n}$ decrease in each iteration as our \texttt{3-TOURNAMENT} algorithm (Algorithm \ref{alg:median}) makes progress.

Initially, $l_0 = h_0 = \frac{1}{2} - \epsilon$. Let $h_{i+1} = 3h^2_i - 2h^3_i$ and $l_{i+1} = 3{l}^2_i - 2l^3_i$ for $i \geq 0$, we will show that $\frac{|L_i|}{n}$ and $\frac{|H_i|}{n}$ concentrate around $l_i$ and $h_i$. Note that $h_i$ and $l_i$ roughly square in each iteration. Once they decrease below a constant after the first $O(\log (1/\epsilon))$ iterations, they decrease double exponentially in each iteration. The tournaments end when $l_i$ and $h_i$ decrease below $T = 1/n^{1/3}$.

\begin{algorithm}[H]
\begin{algorithmic}[1]
\STATE $h_0, l_0 \leftarrow \frac{1}{2}-\epsilon$ \\
\STATE $i \leftarrow 0$, $T = 1/n^{1/3}$.
\WHILE{$l_i > T$}
	\STATE $h_{i+1} \leftarrow 3h^2_i - 2h^3_i$, $l_{i+1} \leftarrow 3l^2_i - 2l_i^3$ \\
		\STATE Select three nodes $t_1(v)$, $t_2(v)$, $t_3(v)$ randomly. \\
		\STATE $x_v \leftarrow median(x_{t_1(v)}, x_{t_2(v)}, x_{t_3(v)})$.
\ENDWHILE
\STATE Sample $K=O(1)$ nodes uniformly at random and output the median value of these nodes.
\end{algorithmic} 
\caption{\texttt{3-TOURNAMENT}($v$)}
\label{alg:median}
\end{algorithm}

\begin{replemma}{3tournamentround}Let $t$ denote the number of iterations in Algorithm \ref{alg:median}. We have $t \leq  \log_{11/8} (\frac{1}{4\epsilon}) + \log_{2} \log_{4} n = O(\log(1/\epsilon) + \log\log n)$. \end{replemma}
\begin{proof}
Suppose that $(\frac{11}{8})^{i-1}\epsilon \leq 1/4$, we will show that $l_i \leq 1 - (\frac{11}{8})^{i}\epsilon$. First, $l_0 = 1 - \epsilon$. Suppose by induction that $l_{i-1} \leq 1 - (\frac{11}{8})^{i-1}\epsilon$. We have:
\begin{align*}
l_i &= 3 l^{2}_{i-1} - 2 l^{3}_{i-1} \\
&= 3 \left(\frac{1}{2} - \left(\frac{11}{8}\right)^{i-1} \cdot \epsilon \right)^2 - 2 \left(\frac{1}{2} - \left(\frac{11}{8}\right)^{i-1} \cdot \epsilon \right)^3 \\
&= \frac{1}{2} - \frac{3}{2} \left(\frac{11}{8}\right)^{i-1}\cdot \epsilon + 2 \left( \left(\frac{11}{8}\right)^{i-1} \cdot \epsilon\right)^3 \\
&\leq \frac{1}{2} - \frac{3}{2} \left(\frac{11}{8}\right)^{i-1}\cdot \epsilon + \frac{1}{8} \left( \left(\frac{11}{8} \cdot \epsilon \right)^{i-1}\right) \\
&= \frac{1}{2} - \left( \frac{11}{8} \right)^{i} \cdot \epsilon 
\end{align*}
Therefore, after $i_0 =  \log_{11/8} (\frac{1}{4\epsilon}) $ iterations, we have $l_i \leq \frac{1}{2} - \left(\frac{11}{8} \right)^{i_0} \epsilon \leq \frac{1}{2} - \frac{1}{4} = \frac{1}{4}$.

Suppose that $i_1 = \log_{2} \log_{4} n$, we have 
\begin{align*}
l_{i_0+ i_1} &\leq 3 l_{i_0 +i_1 -1}^2 \\
&\leq 3^{i_1} l_{i_0}^{2^{i_1}} \\
&\leq  3^{i_1} \cdot \left(\frac{1}{4} \right)^{2^{i_1}}\\ 
&\leq (\log_4 n)^{\log_2 3} \cdot \left(\frac{1}{4} \right)^{\log_4 n}\\
&= \frac{(\log_4 n)^{\log_2 3}}{n} \leq n^{2/3}
\end{align*}

Therefore, $t \leq i_0 + i_1 =  \log_{11/8} (\frac{1}{4\epsilon}) + \log_{2} \log_{4} n = O(\log (1/\epsilon) + \log\log n)$.
\end{proof}

\begin{lemma}For each iteration $0 \leq i < t$, $\E[\frac{|L_{i+1}|}{n} \mid L_{i}] = 3(\frac{|L_i|}{n})^2 -2(\frac{|L_i|}{n})^3$ and $\E[\frac{|H_{i+1}|}{n} \mid H_{i}] =3(\frac{|H_{i}|}{n})^2 -2(\frac{|H_{i}|}{n})^3$.\end{lemma}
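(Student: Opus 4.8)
The plan is to compute the conditional expectation directly, exactly as was done for the two-tournament analysis in Lemma~\ref{lem:first_rounds}. The key observation is that in the \texttt{3-TOURNAMENT} algorithm, every node $v$ simultaneously and independently samples three nodes $t_1(v), t_2(v), t_3(v)$ uniformly at random (with replacement) and updates $x_v$ to the median of the three sampled values. Hence, after iteration $i+1$, node $v$ lies in $L_{i+1}$ if and only if the median of $x_{t_1(v)}, x_{t_2(v)}, x_{t_3(v)}$ has quantile in $[0,\tfrac12-\epsilon)$, which happens exactly when at least two of the three sampled nodes were in $L_i$ (the event that the median of three elements lies below a threshold is precisely the event that a majority of the three lie below it).

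First I would write $\E\!\left[\frac{|L_{i+1}|}{n}\mid L_i\right] = \frac{1}{n}\sum_{v\in V}\Pr\!\left(v\in L_{i+1}\mid L_i\right)$ by linearity of expectation. Then for a fixed $v$, since each $t_j(v)$ independently lands in $L_i$ with probability $\frac{|L_i|}{n}$, writing $p = \frac{|L_i|}{n}$, the probability that at least two of the three are in $L_i$ is $\binom{3}{2}p^2(1-p) + \binom{3}{3}p^3 = 3p^2 - 3p^3 + p^3 = 3p^2 - 2p^3$. Summing over the $n$ nodes and dividing by $n$ gives $\E\!\left[\frac{|L_{i+1}|}{n}\mid L_i\right] = 3\left(\frac{|L_i|}{n}\right)^2 - 2\left(\frac{|L_i|}{n}\right)^3$. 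The argument for $H_{i+1}$ is identical: a node ends up in $H_i$ (quantile in $(\tfrac12-\epsilon, 1]$) after taking the median of three samples exactly when at least two of the samples were in $H_i$, so the same binomial computation yields $\E\!\left[\frac{|H_{i+1}|}{n}\mid H_i\right] = 3\left(\frac{|H_i|}{n}\right)^2 - 2\left(\frac{|H_i|}{n}\right)^3$.

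There is essentially no obstacle here; the only point requiring a moment of care is the characterization of the event ``$\mathrm{median}$ of three values lies in an interval.'' For $L_{i+1}$ the interval $[0,\tfrac12-\epsilon)$ is a downward-closed set of quantiles, so ``median $\in L_i$'' is literally ``majority of samples $\in L_i$'', which is clean. For $H_{i+1}$ the interval $(\tfrac12-\epsilon,1]$ is upward-closed, so the same reasoning applies symmetrically. (Note $M_i$ is a middle interval and would not admit such a clean description, which is why the lemma only addresses $L$ and $H$; $|M_i|$ is controlled indirectly via $|L_i|$ and $|H_i|$ as in the Phase~I analysis.) The conditioning is only on $L_i$ (resp.\ $H_i$) because the transition probability for $v$ depends on the previous configuration only through the size of $L_i$ (resp.\ $H_i$), and the events $\{v\in L_{i+1}\}$ are mutually independent across $v$ given iteration $i$'s state — a fact that is also what makes the subsequent Chernoff concentration arguments go through.
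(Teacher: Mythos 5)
Your proposal is correct and takes essentially the same approach as the paper: both identify that $v\in L_{i+1}$ iff at least two of the three samples land in $L_i$, and then compute $\binom{3}{2}p^2(1-p)+p^3 = 3p^2-2p^3$ with $p=|L_i|/n$, with the identical symmetric argument for $H$. The extra remarks you add about downward/upward closedness and independence across $v$ are accurate elaborations but not a different route.
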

\begin{proof}
We will show the proof for $\E[\frac{|L_{i+1}|}{n} | L_i]$, since $\E[\frac{|H_{i+1}|}{n} \mid H_{i}]$ is the same. Note that a node $v$ is in $L_{i+1}$ if and only if at least 2 ot $t_1(v)$, $t_2(v)$, $t_3(v)$ are in $L_{i}$. Therefore,
\begin{align*}
\E\left[\frac{|L_{i+1}|}{n} \mid L_{i}\right] &= \frac{1}{n} \sum_{v \in V} \left( \left(\frac{|L_{i}|}{n}\right)^3  + 3\left(\frac{|L_{i}|}{n}\right)^2 \left(1 - \frac{|L_{i}|}{n} \right) \right) 
=3\left(\frac{|L_i|}{n}\right)^2 -2\left(\frac{|L_i|}{n}\right)^3 \qedhere
\end{align*}
\end{proof}

The following lemma shows the probabilities that $\frac{|L_{i}|}{n}$ and $\frac{|H_{i}|}{n}$ deviate from their expectation by a $(1 + \epsilon')$ factor are polynomially small in $n$, provided $\epsilon'$ is sufficiently large. We cap the quantity at $T$ for the purpose of applying concentration inequalities.

\begin{replemma}{LHconcentrate}\label{lem:LH_concentrate}Let $\epsilon' = \Omega(\frac{(\log n)^{1/2}} { n^{1/3}})$, w.h.p.~for each iteration $0 \leq i <= t$, $\frac{|H_i|}{n} \leq (1+\epsilon') \cdot \max(T, \E[\frac{|H_i|}{n} \mid H_{i-1}])$ and $\frac{|L_i|}{n} \leq (1+\epsilon') \cdot \max(T, \E[\frac{|L_i|}{n} \mid L_{i-1}])$. \end{replemma}

\begin{proof}
In each iteration, since each node set its value independently, by Chernoff Bound (see Lemma \ref{lem:largedev}), we have 
\begin{align*}
	\Pr\left(\frac{|L_i|}{n} \leq (1 + \epsilon') \cdot \max(T, \frac{\E[|L_i| \mid L_{i-1}]}{n}) \right) &\leq \exp(-\Omega(\epsilon'^2 \max(n\cdot T, \E[|L_i| \mid L_{i-1}]) ) ) \\
	= \exp(-\Omega(\epsilon'^2 n^{2/3} ) )= 1/\poly(n)  
\end{align*}
The proof for $|H_i|$ is the same.
\end{proof}

Then we bound the cumulative deviation of $\frac{|H_i|}{n}$ from $h_i$ and the cumulative deviation of $\frac{|L_i|}{n}$ from $l_i$.
\begin{replemma}{error2}\label{lem:error2}Let $\epsilon' = \Omega(\frac{\log n^{1/2}} { n^{1/3}})$. W.h.p. for each iteration $0 \leq i <= t$, $\frac{|L_{i}|}{n} \leq \max((1+\epsilon')^{\frac{3^{i}-1}{2}} \cdot l_{i}, (1+\epsilon')T )$ and $\frac{|H_{i}|}{n} \leq \max((1+\epsilon')^{\frac{3^{i}-1}{2}}  \cdot h_{i+1}, (1+\epsilon')T )$.\end{replemma}
\begin{proof}
We only show the proof for $|L_i|$ and we will prove by induction. Initially, $\frac{|L_0|}{n} \leq l_0$. Suppose that $\frac{|L_{i}|}{n} \leq \max((1+\epsilon')^{\frac{3^{i}-1}{2}} \cdot l_{i+1}, (1+\epsilon')T )$ is true.

By Lemma \ref{lem:LH_concentrate}, we have w.h.p., $\frac{|L_{i+1}|}{n} \leq (1+ \epsilon') \cdot \max(\frac{\E[|L_{i+1}| \mid L_{i}]}{n}, T)$. If $\frac{\E[|L_{i+1}| \mid L_{i}]}{n} \leq T$, then we are done, since $\frac{|L_{i+1}|}{n} \leq (1+ \epsilon') \cdot \frac{\E[|L_{i+1}| \mid L_i]}{n} = (1+\epsilon')\cdot T$.  Also, if $\frac{|L_{i}|}{n} \geq (1+\epsilon')T$, then $\frac{\E[|L_{i+1}| \mid L_{i}]}{n} \leq 3 \cdot \frac{|L_i|^2}{n^2} \leq  \frac{|L_i|}{n} \cdot O(\frac{1}{n^{2/3}}) \leq T$. 

Otherwise, if $\frac{|L_{i}|}{n} > (1+\epsilon')T$ then it must be the case that $\frac{|L_{i}|}{n} \leq (1+\epsilon')^{\frac{3^{i}-1}{2}} \cdot l_{i+1}$ by induction hypothesis. Therefore,
\begin{align*}
	\frac{|L_{i+1}|}{n} &\leq (1+\epsilon') \cdot \frac{\E[|L_i| \mid L_{i-1}]}{n} \\
	&\leq (1+\epsilon') \cdot 3\left(\frac{|L_i|}{n}\right)^2 -2\left(\frac{|L_i|}{n}\right)^3 \\
	&\leq (1+\epsilon') \cdot \left( 3\left((1+\epsilon')^{\frac{3^{i}-1}{2}}\cdot l_i \right)^2 -2\left((1+\epsilon')^{\frac{3^{i}-1}{2}}\cdot l_{i} \right)^3 \right) \\ 
	&\leq (1+\epsilon') \cdot \left( \left((1+\epsilon')^{\frac{3^{i+1}-3}{2}} \right) \cdot \left( 3\cdot  l_i^2  -2 \cdot l^3_{i} \right) \right) =  \left((1+\epsilon')^{\frac{3^{i+1}-1}{2}} \right) \cdot l_{i+1} \qedhere  
\end{align*}
\end{proof}

Now, by setting $\epsilon'$ roughly equal to $1/3^{t} \sim \epsilon^{3.45} / (\log_4 {n})^{1.59}$, we can bound the final deviations of $\frac{|H_i|}{n}$ and $\frac{|L_i|}{n}$ from $T$ by a factor of 2 w.h.p. 
\begin{replemma}{nobad}\label{lemma:nobad} Let $\epsilon = \Omega(\frac{\log^{0.61} n}{n^{0.096}})$ and $\epsilon' = \epsilon^{3.45} / (\log_4 {n})^{1.59}$, then $\frac{|L_{t}|}{n} \leq 2T$ and $\frac{|H_{t}|}{n} \leq 2T$ w.h.p.  \end{replemma}
\begin{proof}

First, $\epsilon' = \Omega(\frac{\epsilon^{3.45}}{\log^{1.59} n}) = \Omega(\frac{\log^{1/2} n}{ n^{1/3}}) $.
By Lemma \ref{lem:error2}, w.h.p.~we have either $\frac{|L_t|}{n} \leq (1+\epsilon')\cdot T$ or $\frac{|L_t|}{n} \leq (1+\epsilon')^{\frac{3^{i}-1}{2}} \cdot l_{t}$. If it is the former, then we are done since $\frac{|L_t|}{n} \leq (1+\epsilon')\cdot T \leq 2T$. Otherwise, we have
\begin{align*}
	\frac{|L_{t}|}{n} &\leq \left((1+\epsilon')^{\frac{3^{t}-1}{2}} \right) \cdot T \\
	&\leq \left( 1 + 2\cdot \epsilon' \cdot \frac{3^{t}-1}{2} \right)\cdot T && \mbox{since $(1+x)^n \leq 1+2nx$ for $nx \leq \frac{1}{2}$} \\
	&\leq \left( 1 +  \epsilon' \cdot {3^{ \log_{11/8} (\frac{1}{4\epsilon}) + \log_{2} \log_{4} n }} \right)\cdot T \\
	&\leq \left(1+ \epsilon' \cdot \left(\frac{1}{4\epsilon}\right)^{\log_{11/8} 3} \cdot \log_4^{\log_2 3} n \right) \cdot T \\
	&\leq 2 \cdot T && \epsilon' = \epsilon^{3.45} / (\log_4 {n})^{1.59} \qedhere
\end{align*}
\end{proof}

Finally, we show that when $\frac{|L_t|}{n}$ and $\frac{|H_t|}{n}$ are $O(1/n^{2/3})$, if we sample a constant number of values randomly and output the median of them, then w.h.p.~the median is in $M_t$.

\begin{lemma}\label{lem:last2}W.h.p.~every node outputs a quantile in $[\frac{1}{2} - \epsilon, \frac{1}{2} + \epsilon]$.\end{lemma}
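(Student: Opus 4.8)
The plan is to combine the two structural facts we have already established about the end of the \texttt{3-TOURNAMENT} loop with a routine sampling (union-bound) argument over the $K=O(1)$ queried nodes. By Lemma~\ref{lemma:nobad} we know that w.h.p. $\frac{|L_t|}{n}\le 2T$ and $\frac{|H_t|}{n}\le 2T$ with $T=1/n^{1/3}$; equivalently $\frac{|M_t|}{n}\ge 1-4T = 1-O(n^{-1/3})$. So almost every node currently holds a value whose \emph{current} rank lies in $[\tfrac12-\epsilon,\tfrac12+\epsilon]$ — but I need to be a little careful here, because what the lemma must ultimately assert is about the rank of the \emph{output value in the original data}, and the sets $L_i,M_i,H_i$ as defined in Phase~II are defined by original quantiles, so a node ``in $M_t$'' by definition holds one of the values whose original rank is in $[(\tfrac12-\epsilon)n,(\tfrac12+\epsilon)n]$. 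Thus the only thing left is to argue the final sampling step picks such a value.

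The key steps, in order. First, condition on the (w.h.p.) event from Lemma~\ref{lemma:nobad} that $|L_t|, |H_t|\le 2Tn = 2n^{2/3}$. Second, when a node samples one of the $K$ nodes uniformly at random, the probability that this sampled node lies in $L_t\cup H_t$ is at most $4n^{2/3}/n = 4n^{-1/3}$. Third, union-bound over the $K=O(1)$ samples: the probability that \emph{any} of the $K$ sampled nodes lies in $L_t\cup H_t$ is at most $4K n^{-1/3} = o(1)$ — in fact $n^{-\Omega(1)}$. Hence, with probability $1-n^{-\Omega(1)}$, all $K$ sampled nodes lie in $M_t$, so all $K$ sampled values have original rank in $[(\tfrac12-\epsilon)n,(\tfrac12+\epsilon)n]$; a fortiori their median does too, and the node outputs a value of quantile in $[\tfrac12-\epsilon,\tfrac12+\epsilon]$. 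Fourth, union-bound this over all $n$ nodes (losing only a factor $n$, which $n^{-\Omega(1)}$ absorbs if we pick the hidden constant in $K$ large enough, or simply note $4Kn\cdot n^{-1/3} = 4Kn^{2/3}$ — wait, that is not $o(1)$).

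That last point is the one real subtlety, and it is where I expect the bookkeeping to bite: a naive union bound over all $n$ nodes gives failure probability $O(Kn^{2/3})\gg 1$, which is useless. The fix is that each node's output depends only on the \emph{fixed} set $M_t$ (already pinned down by the w.h.p. event of Lemma~\ref{lemma:nobad}) and on its own independent $K$ samples, so I should instead bound the \emph{number} of nodes that fail: the expected number of nodes whose sample hits $L_t\cup H_t$ is at most $4Kn^{2/3}$, and by a Chernoff bound this is concentrated, so w.h.p. at most, say, $n^{3/4}$ nodes output a bad value. But the lemma as stated claims \emph{every} node outputs a good quantile, so either (a) the intended reading is ``all but a negligible fraction,'' consistent with the $\exp(-t)$-fraction caveat mentioned in the introduction and Theorem~\ref{thm:robust}, or (b) one needs an extra clean-up: after the sampling step, run $O(1)$ more \texttt{3-TOURNAMENT}-style rounds, or a short broadcast/majority step, to propagate the (super-)majority value to the stragglers. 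Given the surrounding text, I would prove the clean statement that holds for \emph{all} nodes by making $K=\Theta(\log n)$ — wait, but $K$ is asserted to be $O(1)$; so instead I would take $K$ a large enough constant so that a single sampled node is bad with probability $n^{-c}$ for $c$ slightly larger than needed, i.e. use that $4n^{2/3}/n$ already gives each of the $K$ independent samples bad-probability $\le 4n^{-1/3}$, so all-$K$-bad-or-tied needs more than $K/2$ bad, which by a Chernoff/binomial-tail bound is at most $\binom{K}{K/2}(4n^{-1/3})^{K/2} = n^{-K/6 + o(1)}$; choosing $K\ge 12$ makes this $\le n^{-2}$, and the union bound over $n$ nodes then gives failure $\le n^{-1}$, i.e. w.h.p. every node outputs a quantile in $[\tfrac12-\epsilon,\tfrac12+\epsilon]$. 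I would write the proof in this last form, since it keeps $K=O(1)$ and delivers the stated ``every node'' conclusion cleanly.
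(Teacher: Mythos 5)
Your final argument matches the paper's proof: condition on Lemma~\ref{lemma:nobad}, bound the probability that at least $K/2$ of the $K$ independent samples land outside $M_t$ by a binomial-tail estimate of the form $\binom{K}{K/2}\cdot\Theta(n^{-1/3})^{K/2}$, and choose the constant $K$ large enough so that a union bound over all $n$ nodes still leaves failure probability $1/\poly(n)$. The only differences are cosmetic — you lump $L_t\cup H_t$ into a single bad event where the paper bounds the $L_t$-majority and $H_t$-majority events separately, and you make the final union bound over nodes explicit where the paper leaves it implicit.
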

\begin{proof}
By Corollary \ref{lemma:nobad}, w.h.p.~at the end of iteration $t$, $\frac{|L_t|}{n} \leq 2/n^{2/3}$ and $\frac{|H_t|}{n} \leq 2/n^{2/3}$. The algorithm outputs a quantile in $[\frac{1}{2} - \epsilon, \frac{1}{2} + \epsilon]$ if there are less than $K/2$ nodes in $L_t$ are sampled and less than $K/2$ nodes in $H_t$ are sampled.

The probability that at least $K/2$ nodes in $|L_t|$ are sampled is at most
\begin{align*}
\binom{K}{K/2} \cdot \left(\frac{2}{n^{2/3}} \right)^{K/2} &\leq \left(\frac{eK}{\frac{K}{2}}\right)^{K/2}\cdot \left(\frac{2}{n^{2/3}} \right)^{K/2} 
\leq \left(\frac{4e}{n^{2/3}} \right)^{K/2} 
\end{align*}
Similarly, the probability that at least $K/2$ nodes in $|H_t|$ are sampled is also at most $\left(\frac{4e}{n^{2/3}} \right)^{K/2}$. By an union bound, the probability that less than $K/2$ nodes in $L_t$ are sampled and less than $K/2$ nodes in $H_t$ are sample is at least $1 - 2\cdot \left(\frac{4e}{n^{2/3}} \right)^{K/2} = 1- 1/\poly(n)$.
\end{proof}

Theorem \ref{thm:quantile_approx} follows from Lemma \ref{lem:last} and Lemma \ref{lem:last2}.

\section{Exact Quantile Computation}
To fill the gap of approximating the $\phi$-quantile with $\epsilon$ error for $\epsilon = O(1/n^{0.096})$, we show that the exact quantile computation can be done in $O(\log n)$ rounds using $O(\log n)$ message size. Since we are to compute the exact quantile $\phi$, we can assume that $k_0 \defeq \phi \cdot n$ is an integer. The problem is to compute a value whose rank is $k_0$. Again, w.l.o.g.~we assume that every value has a distinct value initially.
\begin{algorithm}[H]
\begin{algorithmic}[1]
\STATE $k_0 \leftarrow \phi \cdot n$
\FOR{$i = 1,2, \ldots ,25$}
	\STATE \label{line:1} Each node $v$ computes an $\frac{\epsilon}{2}$-approximate of the $(\frac{k_{i-1}}{n}-\frac{\epsilon}{2})$-quantile and an $\frac{\epsilon}{2}$-approximate of the $(\frac{k_{i-1}}{n}+\frac{\epsilon}{2})$-quantile with $\epsilon = n^{-0.05}/2$.
	\STATE \label{line:2} Each node learns the max and the min of these approximates of all nodes.
	\STATE \label{line:2.5} Compute the rank of $\min$ among the original $x_v$'s and denote it by $R$. 
	\STATE \label{line:4} Each node $v$ set $x_v \leftarrow \infty$ if $x_v \notin [\min,\max]$. Call these nodes {\it valueless}, otherwise {\it valued}. 
	\STATE  \label{line:5} Let $m_i$ be the smallest power of 2 that is larger than $(n^{0.99}/2)/(\mbox{\# valued nodes})$. Each valued node makes $m_i$ copies of its value and distribute them to valueless nodes so that there are at least $n^{0.99} / 2$ valued nodes. (For convenience, we let the duplicated values to have smaller ranks than the original one.)
	\STATE \label{line:6} Set $k_i \leftarrow m_i\cdot (k_{i-1} - R + 1)$.
\ENDFOR
\STATE Every node outputs an $(\epsilon/3)$-approximate $(\frac{k_{25}}{n} - \frac{\epsilon}{2})$-quantile.
\end{algorithmic} 
\caption{Exact Quantile Computation}
\label{alg:exact}
\end{algorithm}

The following is a detailed implementation of each step. 

	 {\bf Step \ref{line:1}}: Since $\epsilon = n^{-0.05}/2$, we can $\frac{\epsilon}{2}$-approximate the quantiles in $O(\log n)$ rounds by Theorem \ref{thm:quantile_approx}.

	 {\bf Step \ref{line:2}}: The maximum (the minimum) can be computed by having each node forwarding the maximum (the minimum) value it has ever received. Since it takes $O(\log n)$ rounds to spread a message by \cite{FG85, Pittel87}, this step can be done in $O(\log n)$ rounds.
	
		 {\bf Step \ref{line:2.5}}: The rank of the minimum can be computed by performing a counting. The nodes whose $x_v$ values are less than or equal to the minimum will be assigned 1. Otherwise they are assigned 0. By Kempe et al.~\cite{KDG03}, the sum can be aggregated in $O(\log n)$ rounds w.h.p.
	
	 {\bf Step \ref{line:5}}: Consider the following process for distributing the values. Initially, every valued node $v$ generates a token with weight equal to $m_i$, the targeted number of copies. We denote the token by a value-weight pair $(x_v, m_i)$. Recall that $m_i$ is a power of 2 and it can be computed by counting the number of valued nodes in $O(\log n)$ rounds. The goal is to split and distribute the tokens so that every node has at most one token of weight 1 in the end.
	
	The process consists of $O(\log n)$ phases and each phase uses $O(1)$ rounds. Suppose that node $v$ holds tokens $(x_1,w_1), (x_2,w_2), \ldots (x_m, w_m)$ at the beginning of a phase. For each token $(x_i, w_i)$, if $w_i \neq 1$, $v$  splits $(x_i, w_i)$ into two tokens $(x_i, w_i/2)$ and push each one to a node randomly. If $w_i = 1$, then the token remains at $v$. Note that when two tokens of the same value are on the same node, they do not merge. 
	
	First note $\lg m_i = O(\log n)$ phases are needed to split to tokens into tokens of weight 1. Now, we show that it takes constant number of rounds to implement one phase. 

	Let $N(v,i)$ denote the number of tokens at $v$ at the end $i$'th phase. Let $N = \max_{v,i} N(v,i)$. It takes $N$ rounds to implement one phase. The value of $N(v,i)$ can be bounded by following calculation. Since at any phase, there cannot be more than $n^{0.99}$ tokens and each token appears at a node chosen uniformly at random, we have
\begin{align*}
	\Pr(N(v,i) \geq 100K) &\leq \binom{n^{0.99}}{100K} \cdot \frac{1}{n^{100K}} \leq \left(\frac{n^{0.99}}{n} \right)^{100K} = (1/n)^{K}.
\end{align*}
	By taking an union bound over each $v$ and $i$, we conclude that $N < 100K$ w.h.p. Therefore, w.h.p.~each phase can be implemented in $200K = O(1)$ rounds.	
	At the end of $O(\log n)$ phase, every node has at most $100K$ tokens of weight 1 w.h.p.

	Next, in the subsequent phases, if a node holds more than one token, then it will push every token except one to a random node (one random node per token). 	We say a token succeeded at the end of the phase if it was pushed to a node without any other tokens.  Consider a token, the probability a token did not succeed in a phase is at most $n^{0.99} / n = 1/n^{0.01}$, since there are at most $n^{0.99}$ nodes with tokens at the end of the phase.
	
Therefore, after $100K$ phases, the probability that a token did not succeed is at most $1/n^{0.01\cdot 100K} = 1/n^{K}$. By a union bound over the tokens, we conclude that w.h.p.~all tokens succeeded. Moreover, again, each phase can be implemented using $100K$ rounds. Therefore, we can split and distribute the tokens in $O(\log n)$ rounds. Each original token is duplicated with exactly $m_i$ copies.  Then, those nodes holding a token set its value to the value of the token. Nodes without a token will remain valueless.

	\paragraph{Correctness: }
Let $\texttt{ans}$ be the answer (the value whose initial rank is $k_0$). Let $M_i = \prod_{j=1}^{i} m_i$ denote the number of copies of each value from the beginning. We show by induction that the values whose ranks lying in $(k_i - M_i ,k_i]$ are $\texttt{ans}$ after iteration $i$. Initially, $M_0 = 1$, the statement trivially holds. 

Suppose that at the end of iteration $i-1$, the values whose ranks lying in $(k_{i-1} - M_i ,k_{i-1}]$ are $\texttt{ans}$.  After Step \ref{line:4} in iteration $i$, the rank of $\texttt{ans}$ is exactly $k_{i-1}-R+1$.  Since every value is duplicated to have $m_i$ copies after Step \ref{line:5}, the rank of $\texttt{ans}$ becomes $k_{i} = m_i \cdot (k_{i-1}-R+1)$ (Recall that we let the original value to have a larger rank than the duplicated values).  Since the values whose ranks lying in $(k_{i-1} - M_{i-1} ,k_{i-1}]$ were $\texttt{ans}$ at the end of iteration $i-1$ and $\texttt{ans} \in [\min, \max]$, it must be the case that every value in this range is duplicated with $m_i$ copies. Therefore, all values whose ranks lying in $(k_{i} - m_i \cdot M_{i-1}, k_i] = (k_{i} - M_i, k_i]$ are $\texttt{ans}$ after iteration $i$.
	
	

Next, we show that after 25 iterations, $M_{25} \geq \epsilon n$. Suppose that $M_{i-1} \leq \epsilon n$. The number of valueless node after Step \ref{line:4} is at most $2\epsilon n + 2M_{i-1} \leq  4\epsilon n$. The $2\epsilon n$ comes from the fact that at most $2\epsilon n$ values can lie in the range $(\min, \max)$. The $2M_{i-1}$ term is because at most $M_{i-1}$ values are equal to min and max. Therefore, $m_i \geq (n^{0.99}/2)/(4\epsilon  n) = (n^{0.04}/4)$ and $M_i \geq (n^{0.04}/4)\cdot M_{i-1}$.
	
Therefore, $M_{25} \geq \min(\epsilon n, (n^{0.04}/4)^{25}) = \epsilon n$. Now the the values of the items whose ranks are in the range of $(k_{25} - \epsilon n, k_{25}]$ must be \texttt{ans}. We then compute a quantile in $(\frac{k_{25}}{n} - \epsilon, \frac{k_{25}}{n}]$ using our $\epsilon/3$-approximate quantile computation algorithm to approximate the $(\frac{k_{25}}{n} - \frac{\epsilon}{2})$-quantile in $O(\log n)$ rounds.

\section{A Lower Bound}
We show that it takes at least $\Omega(\log(1/\epsilon) + \log \log n)$ rounds to approximate the $\phi$-quantile up to $\epsilon$ error w.h.p.  

\repeattheorem{lowerbound}
\begin{proof}
Consider the following two scenarios. The first is when each node is associated with a distinct value from $\{1,2,\ldots, n\}$. The second is when each node is associated with a distinct value from $\{1 + \lfloor 2 \epsilon n \rfloor, \ldots, n + \lfloor 2 \epsilon n \rfloor \}$.

A node is able to distinguish between these cases only if it recieves an value from $S \defeq \{1,2, \ldots,  1 + \lfloor 2 \epsilon n \rfloor \} \cup \{n+1,n+2, \ldots,  n+ \lfloor 2 \epsilon n \rfloor \}$. Otherwise, it can only output a value whose quantile is in $[1/2 - \epsilon, 1/2 + \epsilon]$ with probability 1/2, since the difference of the $\phi$-quantile in both scenarios is at least $\lfloor 2\epsilon n \rfloor \geq \epsilon n$. 

Call a node {\it good} if it ever received a value from $S$, and {\it bad} otherwise. Note that a bad node cannot outputs a correct answer with probability more than $1/2$. Initially there are at most $2\cdot \lfloor 2 \epsilon n \rfloor \leq 4\epsilon n$ good nodes. We will show that with probability $1-1/n$ there exists at least one bad node at the end of round $t$.

Let $X_0 = 2\cdot \lfloor 2 \epsilon n \rfloor$ and let $X_i$ denote the number of good nodes at the end of round $i$. Given a bad node $v$, it can become good if it pulls from a good node or some good node pushes to it. Let $Y_v$ denote the event that $v$ pulls from a good node, we have $\Pr(Y_v \mid X_i) = X_i /n$. Also, the pushes from the good nodes can only generate another at most $X_i$ good nodes. Therefore,
$$\E[X_{i+1} \mid X_i ] \leq 2X_i + \E\left[\sum_{v \in B_i} Y_v\right] \leq 3X_i, \qquad \mbox{since $\E[\sum_{v \in B_i} Y_v \mid X_i] \leq X_i$}$$

By Chernoff Bound, we have $\Pr(\sum_{v \in B_i} Y_v > 2X_i \mid X_i) \leq e^{-X_i/2} \leq e^{-5\log n} \leq 1/n^5$, since $X_i \geq X_0 \geq 10\log n$. Therefore, with probability at least $1 - 1/n^5$, $X_{i+1} \leq 4X_i$. By taking a union bound over such events for the first $t' = \log_{4} (8/\epsilon)$ rounds, we conclude with probability at least $1-1/n^4$, $X_{t'}\leq (4\epsilon n) 4^{t'} \leq n/2$.

Let $t_0$ be the last round such that $X_{t_0} \leq n/2$. Define $Z_i = |B_i| / n$. A node $v$ remains in $B_i$ if it did not pull from a good node {\bf and} it was not pushed from any good nodes. Denote the event by $W_v$, we have $\Pr(W_v \mid B_i) \geq Z_i \cdot (1 - \frac{1}{n})^{n-1} \geq Z_i\cdot e^{-1}$. Therefore, $$ \E[|B_{i+1}| \mid B_i] = \E\left[ \sum_{v \in B_i} W_v \mid B_i \right]   \geq  \left(\sum_{v \in B_i} Z_{i} \cdot e^{-1} \right)  = |B_i| \cdot Z_{i} \cdot e^{-1}$$

Note that the events $\{W_v\}_{v \in B_i}$ are negatively dependent \cite{DR98} (the number of empty bins in the balls into bins problem). Suppose that $Z_i \cdot |B_i| \geq 60e \log n$. By Chernoff Bound,
$$\Pr\left(|B_{i+1}| \leq |B_i|\cdot Z_i / (2e) \mid B_i\right) \leq e^{-Z_i \cdot |B_i| /(12e)} \leq 1 / n^5$$
Therefore, $\Pr(Z_{i+1} \leq Z^2_i / (2e) \mid Z_i) \leq 1/n^5$. Suppose that $t_1 =O(1)$. We can take an union bound over the subsequent $t_1$ rounds to show that with probability at least $1 - 1/n^4$, $Z_{i+1} \geq Z^2_i/(2e)$ holds for these rounds, as long as $Z^2_{t_0 + t_1} \geq (60\log n) / n$.

Let $z_{t_0} = 1/2$ and $z_{i+1} = z^2_{i} / (2e)$. If $z^2_{t_0 + t_1} \geq (60\log n) / n$, with probability at least $1-1/n^4$, we have $Z_{t_0+t_1} \geq z_{t_0+t_1}$. Let $t_1 = \frac{1}{2}\lg \log n$, by definition of $z_i$, we have \begin{align*} z_{t_0 + t_1} &= \left(\frac{1}{2e}\right)^{t_1} \cdot z_{t_0}^{2^{t_1}} \\
&\geq \left(\frac{1}{2e}\right)^{t_1+ 2^{t_1}}  && z_{t_0} = 1/2 \geq 1/(2e) \\
&= \left(\frac{1}{2e}\right)^{(\lg \log n)/2 + \sqrt{\log n}} \\
&\geq \left(\frac{1}{2e}\right)^{2\sqrt{\log n}} && \mbox{for sufficiently large $n$}
\end{align*}

Since $(2e)^{-2\sqrt{\log n}} = \Omega((\log n) /n)$, $z_{t_0+t_1} \geq 2^{-2\sqrt{\log n}} \geq (60\log n)/n$ for sufficiently large $n$. Therefore, with probability at least $1 - 1/n^4$, $Z_{t_0+t_1} > 0$.

Since with probability $1 - 1/n^4$, $t_0 \geq \log_4{8/\epsilon}$. By taking an union bound over these two events, we conclude with probability at most $2/n^4$, all nodes are good by the end of round $t_0 + t_1$.

Therefore, the probability that every node computes a correct output at the end of round $t_0 + t_1$ is at most $1/2 + 2/n^4$.
\end{proof}

\section{Robustness}
In this section, we show that our algorithm is robust against failures. We consider the following model of failures: Let $0< \mu < 1$ be a constant. 
Every node $v$ in every round $i$ is associated with a probability $0 \leq p_{v,i} \leq \mu$. Note that $p_{v,i}$ is pre-determined before the execution of the algorithm. During the execution of round $i$, each node $v$ fails with probability $p_{v,i}$ to perform its operation (which may be either push or pull). 

\repeattheorem{robustness}

We show how to modify the algorithm and the analysis for the tournament algorithms and the exact quantile computation algorithms in the following.
\subsection{The Tournament Algorithms}\label{sec:robustness1}



First consider the \texttt{2-TOURNAMENT} algorithm. Initially, every node is {\it good}. Consider node $v$, instead of only pulling from 2 neighbors each iteration, now pull from $\Theta(\frac{1}{1-\mu} \cdot \log(\frac{1}{1-\mu}))$ neighbors. We say a pull is {\it good} if the node performing the pull operation did not fail {\bf and} it pulls from a node who is good at the end of iteration $i-1$. A node remains good at the end of iteration $i$ if there are at least two good pulls. Then, $v$ uses the first two good pulls to execute the tournament procedure. Also, note that in the last iteration, we let a good node to have probability of $\delta$ to do the two tournament using the first two good pulls and probability $1-\delta$ to set the value equal to the first good pull. We show that the good nodes always consists of a constant fraction of nodes. 

\begin{lemma} For $0 \leq i \leq t-1$, w.h.p.~at the end of each iteration $i$ of {\normalfont \texttt{2-TOURNAMENT}} , there are at least $n/2$ good nodes if every node pulls from $k = \Theta(\frac{1}{1-\mu} \cdot \log(\frac{1}{1-\mu}))$ other nodes. \end{lemma}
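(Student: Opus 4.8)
The plan is to prove the statement by induction on the iteration $i$, exactly mirroring the structure of the failure-free analysis but replacing "pull 2 values" by "pull $k$ values and keep the first two successful ones". Assume at the end of iteration $i-1$ there are at least $n/2$ good nodes. Fix a node $v$ that wishes to remain good in iteration $i$. For each of its $k$ pull attempts, call the attempt \emph{good} if the attempt does not fail (which happens with probability at least $1-\mu$) \emph{and} the contacted node was good at the end of iteration $i-1$ (probability at least $1/2$ by the inductive hypothesis). Since pull targets are independent and failures are independent, each attempt is good independently with probability at least $(1-\mu)/2$. So the number of good pulls of $v$ stochastically dominates a $\mathrm{Binomial}(k,(1-\mu)/2)$ random variable, and $v$ fails to stay good only if it gets at most one good pull.

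Next I would make the parameter choice explicit: with $k = \Theta\!\big(\frac{1}{1-\mu}\log\frac{1}{1-\mu}\big)$ chosen with a large enough constant, the expected number of good pulls is $\Omega\!\big(\log\frac{1}{1-\mu}\big)$ — actually we just need it to be at least some fixed constant like $8$, which a suitable constant in the $\Theta$ guarantees since $\frac{1}{1-\mu}\log\frac{1}{1-\mu} \ge \log\frac{1}{1-\mu} \ge$ (whatever constant we need after scaling). Then a Chernoff bound gives that $v$ has at least two good pulls with probability at least, say, $3/4$ (in fact $1 - e^{-\Omega(1)}$; the constant in $k$ can be pushed up to make this as close to $1$ as we like, but $3/4$ suffices). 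Hence $\E[\#\text{good nodes at end of iteration }i] \ge \frac34 n$. The indicator that $v$ stays good depends only on $v$'s own (independent) pull choices and failures given the state at the end of iteration $i-1$, so these indicators are mutually independent conditioned on $H_{i-1}$; a second Chernoff bound then shows the count is at least $n/2$ with probability $1 - e^{-\Omega(n)} = 1 - 1/\poly(n)$. A union bound over the $t = O(\log\log n + \log\frac1\epsilon) = O(\log n)$ iterations completes the induction.

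One subtlety I would address carefully: "first two good pulls" must be well-defined and the tournament step (and, in the last iteration, the $\delta$-truncated variant that pulls either one or two values) must still operate on values that are uniformly random among the good nodes' values. Conditioned on the event that $v$ has at least two good pulls, the ordered pair of contacted good nodes is \emph{not} quite uniform — later attempts are conditioned on earlier ones having failed or hit bad nodes — but by symmetry each good pull's target is uniform over all nodes conditioned on being good, and the two kept targets are an exchangeable pair of (near-)independent uniform samples from the good set; this is exactly the mild non-independence already tolerated in the failure-free proof, so the expectation computations of Lemma~\ref{lem:first_rounds}, Lemma~\ref{lem:last_round}, etc., go through with the sampling space restricted to good nodes. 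I would note that since good nodes are a $(\ge 1/2)$-fraction and (as the later robustness argument shows) the quantile structure among good nodes tracks that of all nodes, this restriction only perturbs the relevant fractions by lower-order terms absorbed into the error parameter $\epsilon'$.

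The main obstacle I expect is not the Chernoff/union-bound skeleton — that is routine — but pinning down the precise constant hidden in $k = \Theta(\frac{1}{1-\mu}\log\frac{1}{1-\mu})$ and verifying that the "keep the first two good pulls" rule yields samples independent \emph{enough} across nodes and uniform \emph{enough} within a node that \emph{all} downstream lemmas of Section~\ref{section:first_phase} (and their Phase II analogues) remain valid verbatim with $n$ replaced by $|\text{good nodes}| = \Theta(n)$. In the write-up I would isolate this as a single structural claim ("the kept pulls behave like independent uniform samples from the good set, up to the same slack already present in the analysis") and then invoke the earlier lemmas as black boxes, rather than re-deriving each concentration statement.
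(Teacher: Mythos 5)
Your proposal matches the paper's proof in both structure and substance: induction on the iteration, lower-bound the per-pull success probability by $(1-\mu)/2$ given the inductive hypothesis of $\ge n/2$ good nodes, conclude a constant per-node probability of retaining at least two good pulls, and then apply a Chernoff bound plus a union bound over the $O(\log n)$ iterations. The only cosmetic difference is that the paper bounds the per-node failure probability with a direct binomial-tail estimate ($\binom{k}{k-1}(1-\tfrac{1-\mu}{2})^{k-1} \le 0.44$ for its explicit choice of $k$) rather than a Chernoff bound, and it defers the uniformity-of-sampling discussion you flag as a subtlety to a separate paragraph immediately after the lemma.
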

\begin{proof}
We will prove by induction. Initially, every node is good. Suppose that there are at least $n/2$ good nodes at the end of iteration $i-1$. During iteration $i$, a node with more than $k-2$ bad pulls will become bad at the end of iteration $i$.

Let $k = \frac{4}{1-\mu} \log \frac{4}{1-\mu} + 1$. The probability that there are at least $k-1$ bad pulls is at most:
\begin{align*}\binom{k}{k-1} \cdot \left(1 - \frac{(1-\mu)}{2} \right)^{k-1} &\leq k \cdot e^{-\frac{(1-\mu)}{2} \cdot (k-1)}\\
& \leq \left(\frac{4}{1-\mu} \log \frac{4}{1-\mu} + 1\right) \cdot \left(\frac{1-\mu}{4}\right)^2\\
& \leq 1/e + 1/16 \leq 0.44 && \mbox{$x \log (1/x)$ maximized at $1/e$} \end{align*}
Therefore, the expected number of bad nodes is at most $0.44 n$ for iteration $1\leq i \leq t-1$. Since each node becomes bad independently of other nodes, by Chernoff Bound, w.h.p.~there are at most $0.5n$ bad nodes at the end of iteration $i$. 
\end{proof}

At the end of iteration $t-1$, there are at least $n/2$ good nodes w.h.p. The expected number of bad nodes at the end of iteration $t$ is at most
$\sum_{v \in V} \delta \cdot 0.44 + (1-\delta)/2 \leq 0.5n$. By Chernoff Bound again, we can conclude that w.h.p.~there are at least $n/3$ good nodes. 

We can also modify the process in the same way for the \texttt{3-TOURNAMENT} algorithm. That is, in each iteration each node  pulls from $\Theta(\frac{1}{1-\mu} \log \frac{1}{1-\mu})$ other nodes. If there are less than 3 good pulls, then the node becomes bad. Otherwise, it uses the first 3 good pulls to do the tournament procedure. We can show similarly that there are at most a constant fraction bad nodes in each iteration. 

Consider a node $v$ in the modified processes. Suppose that $v$ is good in iteration $i$, then $v$ must have at least two (or three) good pulls. Note that the probability that $v$ pulls from a particular good node, conditioned $v$ is good at the end of iteration, is uniform among all nodes that are good at the end of iteration $i-1$.

Let $V_i$ denote the set of good nodes and $n_i = |V_i|$. Given any subset of good nodes $S \subseteq V_{i-1}$, the probability of choosing a node in $|S|$ is therefore $|S| / n_i$. Therefore, we can replace $\frac{|L_i|}{n}$, $\frac{|H_i|}{n}$, and $\frac{|M_i|}{n}$ in our proofs in the previous section with $\frac{|L_i|}{n_i}$, $\frac{|H_i|}{n_i}$, and $\frac{|M_i|}{n_i}$ and observe that all the statements hold. Note that $n_i \geq n/3 = \Omega(n)$ for $0\leq i \leq t$ w.h.p. Thus, all the concentration inequalities also hold.

In the last step of Algorithm \ref{alg:median}, all nodes pulls from $\Theta(\frac{K}{1-\mu} \log \frac{K}{1-\mu})$ nodes. If there are $K$ good pulls, then each node outputs the median directly. Otherwise, it becomes bad and outputs nothing. We can similarly show that there are at least constant fraction of good nodes. Therefore, at least a constant fraction of nodes output a correct answer and all the others output nothing. Note that we can use additional $O(t)$ rounds of pulling from valueless nodes to have al}l but $\frac{n}{2^{t}}$ nodes learn a correct answer.

\subsection{Exact Quantile Computation}\label{sec:robustness2}

Consider Algorithm \ref{alg:exact}. We know Step \ref{line:1} (Section \ref{sec:robustness1}),  Step \ref{line:2} \cite{ES09}, and Step \ref{line:2.5} \cite{KDG03} tolerate such failures with a constant factor delay on the running time.

The only step that remains to discuss is Step \ref{line:5}. We run the same algorithm described. Initially each valued node $v$ generates a token $(x_v,m_i)$. In the first $O(\log n)$ phases, each node tries to split every token whose weight is larger than 1 into two tokens with halved weight, and push one of them to a random node. If the push succeed, the token has been pushed is considered a {\it new token} and the one with halved weight remaining at the same node is considered as the {\it old one}. If the push fails, then we will merge them back to the original one. 


We will show that at the end of $O(\log n)$ phase, each node holds at most $O(1)$ tokens and their weights are 1. In each subsequent phases, if a node has more than one tokens, then it will push every token except one to a random neighbor. We argue that after $O(\log n)$ phases, every node contains exactly one token. 

 First, we show that the number of tokens at each node is bounded by a constant w.h.p.~so that each phase can be implemented in $O(1)$ rounds. Recall that $N(v,i)$ is the number of tokens at $v$ at the end of $i$'th phase. Suppose the total number of phases is $C \log n$ for some constant $C > 0$. For a token (except the one that is initially at $v$) to be at node $i$, it must be the case that it was pushed to $i$ during one of the phase. Note that at any phase, there are at most $ n^{0.99}$ tokens. By taking an union bound over all possible set of tokens of size $200K$ and all possible combinations of phases on when each of the token was pushed to $i$, we have 
 $$	\Pr(N(v,i) \geq 200K+1) \leq \binom{n^{0.99}}{200K}\cdot (C \log n)^{200K} \cdot \frac{1}{n^{200K}} \leq \left(\frac{C n^{0.99} \log n}{n} \right)^{200K} \leq 1/n^{K}. $$
 
 Next, we show that for a token $(x_v,m_i)$, it takes $O(\log n)$ rounds to split into tokens of weight 1 w.h.p. Let $T_v(i)$ denote the set of all $v$'s tokens with weight at least 2 at the end of Phase $i$. Let $\Phi(i) = \sum_{(x_v,w) \in T_v(i)} w^2$ be a potential function.  Consider a token $(x_v,w) \in T_v(i)$, since with probability at most $\mu$ it fails to split, the expected contribution of the token to the $\Phi(i+1)$ is at most:
 $$\mu w^2 + (1-\mu) 2\cdot (w/2)^2 =  \left(1 - \frac{1-\mu}{2}\right) \cdot w^2 $$
   Therefore, $\E[\Phi(i+1) \mid \Phi(i)] \leq (1-(1-\mu)/2) \cdot \Phi(i)$. Since $\Phi(0) \leq n$, after $t= \frac{2K }{1-\mu} \log n$ rounds, we have 
$$\E[\Phi(t)] \leq \left(1 - \frac{1-\mu}{2}\right)^{\frac{2K }{1-\mu} \log n} \cdot n \leq e^{-K\log n} \cdot n \leq 1/n^{K-1} $$
	
	Since $\Phi(t)$ must be an integer, we conclude that with probability at least $1 - 1/n^{K-1}$, $\Phi(t) = 0$ by Markov's inequality. Therefore, after $t$ rounds w.h.p.~every token of value $v$ has weight 1. We can further take an union over all tokens to show that it holds for all tokens w.h.p.
	
	Therefore, w.h.p.~the weight of all tokens are 1 at the end of $O(\log n)$ round. In the subsequent phases, the probability a token fail to be pushed to a node without any other tokens is at most $\mu + n^{0.99}/n = O(1)$. Therefore, the probability a token fails in the all of the next $O(\log n)$ phases is at most $1/\poly(n)$. Thus, w.h.p.~after $O(\log n)$ phases, every node has at most one token. Since each phase can be implemented using $O(1)$ rounds, we conclude that Step \ref{line:5} can be done in $O(\log n)$ rounds in our failure model.
	
\section*{Acknowledgement} We thank Frederik Mallmann-Trenn for pointing out \cite{DGMSS11} to us.

\ignore{

\section{Conclusion}
We have given tight bounds the $\epsilon$-quantile computation problem. Also, we showed that our algorithm can handle the situation where each push and pull operation fails independently with a constant probability. It is worthwhile to point out it can be shown easily that computing any  approximation for the sum and the average requires $\Omega(\log n)$ rounds. It is interesting the quantile can be approximated faster.
 
 An interesting open problem is to tackle the complexity of the self-quantile computation problem. We have shown that $\epsilon$-self-quantile computation problem is no harder than $O(1/\epsilon)$ instances of the $\epsilon$-quantile computation problem. It is yet to be investigated whether the $\epsilon$-self-quantile computation problem is at least hard as $O(1/\epsilon)$ instances of the $\epsilon$-quantile computation problem. We know at least for $\epsilon = O(1/n)$, this is not true, because the exact self-quantile computation problem can be solved by broadcasting every value to every node, which can be done in $O(n)$ rounds using network coding by Haeupler \cite{Haeupler16}.

\begin{lemma}
Suppose that an algorithm runs for $t = o(\log \log n)$ rounds, the probability that some node fails to output an value whose quantile is in $[1/3, 2/3]$ is at least $0.3^{\sqrt{\log n}}$, which is at least $1 /\poly(n)$ for sufficiently large $n$.
\end{lemma}

Consider an instance with $n$ nodes where the ID of every node is distinct. Consider node $x$. For a node $u$, define $S_y(0) = \{\ID(y)\}$ and $S_{y}(i) = S_{y}(i-1) \cup S_{t(y,i)}(i-1)$, where $t(y)$ is the contact of $y$ in round $i$. Suppose that $x$ outputs $z$ as the approximation of $\phi$-quantile at the end of round $t \leq (\lg\lg n)/ 2$. Then $z$ must be one of the element in $S_x(t)$, since $x$ cannot possibly know any other IDs than the ones in $S_x(t)$.

Let $X = |S_x(t)|$ be the random variable. Note that $0 \leq X \leq 2^{t} - 1$. For example, if node $x$ keeps choosing itself as its contact, then $X =0$, as it cannot learn the ID besides itself. Note that the random variable $X$ merely depends on the choice of the nodes during the $t$ rounds. 

Suppose that we randomly permute the other $n-1$ nodes before the first round. The distributions of $S_x(t)$ are exactly the same in this modified process and the original. Furthermore, by principle of deferred decision, if we randomly permute the other $n-1$ nodes after the last round instead of before the first round, the distributions of $S_x(t)$ are exactly the same in both processes. We analyze the last process instead of the original one, since after exposing the choices of the random contact to determine $X$, we will still be able to analyze the probabilty by exposing the random permutation.

Let $B$ denotes the ID's that are in the first 1/3 quantile. Conditioned on $X = k$, the probability that all of $S_x(t)$ are from $B$ is at least 
\begin{align*}
\left(\frac{|B|}{n-1}\cdot\frac{|B|-1}{n-2} \ldots \frac{|B| - k + 1}{n-k} \right) &\geq
 \left(\frac{\lfloor (n-1)/3 \rfloor - \sqrt{\lg n} + 1 }{n - \sqrt{\lg n}} \right)^{k} \mbox{for sufficiently large $n$}\\
&\geq 0.3^{k} \\
&\geq 0.3^{\sqrt{\lg n}}
\end{align*}

If $x \in B$, then the probability that $x$ outputs a correct answer is less than $1 - 0.3^{\sqrt{\lg n}}$.

\begin{lemma}
Suppose that an algorithm runs for $t = o(1/\epsilon)$ rounds, the probability that some node fails to output an ID whose quantile is in $[1/2-\epsilon, 1/2+\epsilon]$ is at least $1/\poly(n)$.
\end{lemma}

\begin{proof}
The probability that 
\end{proof}}




\bibliographystyle{alpha}
\bibliography{references}

\clearpage

\appendix












\section{A Sampling Algorithm}\label{apx:sampling}
We first present a simpler algorithm with a higher message complexity of $O( \log^2 n   / \epsilon^2 )$ and $O(\log (1/\epsilon) + \poly(\log\log n))$ running time. In Manku et al.~\cite{MRL99}, they observed the following. Suppose that we sample a multi-set $S$ of nodes (with replacement) from all the node uniformly at random. Then, the $\phi$'th quantile of $S$ has quantile approximately equal to $\phi$ in all the nodes.  

\begin{lemma}\label{lem:sample} Suppose that a multi-set $S$ of values are sampled from the $n$ values and $|S| = \Omega(\log n / \epsilon^2)$. Given $z$ (which is not necessarily in $S$), define $Q_S(z)$ to be the number of elements in $S$ whose values are smaller than $z$. We have $|Q_S(z) - Q(z)| \leq \epsilon$ w.h.p. \end{lemma}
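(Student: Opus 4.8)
The plan is to fix the threshold $z$ first, prove a pointwise concentration bound with a Chernoff/Hoeffding estimate, and then promote it to a bound that holds uniformly over all $z$ by a union bound over the $O(n)$ breakpoints of the two empirical/true rank functions.

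First I would fix the normalization: read $Q_S(z)$ as the \emph{fraction} $\frac{1}{|S|}\sum_{s\in S}\mathbf{1}[s<z]$ of sampled values below $z$, and $Q(z)$ as the fraction of the $n$ input values below $z$. Fix a value $z$. Because $S$ is drawn with replacement, the indicators $\mathbf{1}[s<z]$, $s\in S$, are i.i.d.\ Bernoulli random variables with mean exactly $Q(z)$, so $|S|\cdot Q_S(z)$ is a sum of $|S|$ independent $\{0,1\}$ variables with expectation $|S|\cdot Q(z)$. A standard Chernoff/Hoeffding bound then gives $\Pr[\,|Q_S(z)-Q(z)|>\eps\,]\le 2\exp(-2\eps^2|S|)$. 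Taking the hidden constant in $|S|=\Omega(\log n/\eps^2)$ large enough, this probability is at most $1/n^{c}$ for any desired constant $c$.

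Next I would upgrade this to a uniform statement. The point is that both $Q(\cdot)$ and $Q_S(\cdot)$ are non-decreasing step functions: $Q$ changes value only at the $n$ input values, and $Q_S$ changes value only at elements of $S$, which form a subset of the input values. Hence $\sup_z|Q_S(z)-Q(z)|$ is controlled by evaluating the difference at $O(n)$ ``test points'' — each of the $n$ input values (and, to be safe with the strict inequality ``$<$'', also the left limit at each of them), plus $+\infty$. Applying the pointwise bound at each of these $O(n)$ points and a union bound yields $\sup_z|Q_S(z)-Q(z)|\le\eps$ with probability $1-1/\poly(n)$, which is the claim.

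I do not expect a genuine obstacle here; this is a Glivenko--Cantelli/DKW-style argument. The two things to be careful about are (i) that sampling \emph{with replacement} is exactly what makes the indicators independent, so a clean Chernoff bound applies without negative-dependence gymnastics, and (ii) the reduction of the continuum of thresholds $z$ to a finite set of $O(n)$ breakpoints, so that the union bound costs only an $O(\log n)$ factor in $|S|$ — precisely the slack built into the hypothesis $|S|=\Omega(\log n/\eps^2)$.
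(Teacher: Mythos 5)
Your proof is correct, but there is nothing in the paper to compare it against: the paper states Lemma~\ref{lem:sample} without proof, crediting the observation to Manku et al.~\cite{MRL99}. So what you have written is a self-contained fill-in rather than a reconstruction of an existing argument.

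A few remarks on what you did, since it touches a subtlety worth noticing. The lemma as literally phrased — ``Given $z$ \ldots'' — reads like a pointwise claim for a single fixed $z$, for which your Hoeffding step alone suffices (the indicators $\mathbf{1}[s<z]$ are i.i.d.\ Bernoulli under sampling with replacement, so $\Pr[\,|Q_S(z)-Q(z)|>\eps\,]\le 2\exp(-2\eps^2|S|) \le 1/\poly(n)$ once the hidden constant in $|S|=\Omega(\log n/\eps^2)$ is set). But in the paper the lemma is applied to a $z$ that is \emph{selected as a function of $S$} (``output the element $z\in S$ such that $Q_S(z)$ is closest to $\phi$''), and a pointwise bound for a pre-specified $z$ does not directly license that use. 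Your second step — passing to the $O(n)$ breakpoints of the two step functions and union-bounding, DKW/Glivenko--Cantelli style — is exactly what makes the lemma usable in the way the paper uses it, so the uniform version you prove is the one actually needed, not an optional strengthening. You were also right to normalize $Q_S$ by $|S|$ and $Q$ by $n$; the paper's phrasing (``the number of elements'') is loose, and without normalization the inequality would compare quantities on different scales. One cosmetic caveat: the bound $2\exp(-2\eps^2|S|)$ is Hoeffding's two-sided inequality rather than the multiplicative Chernoff bound stated in the paper's tools section (Lemma~\ref{lem:chernofforignial}); either works, but if you want to match the paper's toolbox you would instead use the multiplicative form and note that $\E[|S|\cdot Q_S(z)] = |S|\cdot Q(z)$ may be small when $Q(z)$ is small, in which case the additive large-deviation form (Lemma~\ref{lem:largedev}) with $M=|S|$ is the cleaner citation.
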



Therefore, we can $\epsilon$-approximate the $\phi$-quantile or $\epsilon$-approximate the quantile of every node in $O(\log n / \epsilon^2)$ rounds. Each node $v$ uniformly samples $|S_v|$ nodes in $|S_v|$ rounds. To approximate the $\phi$-quantile, output the element $z \in S$ such that $Q_S(z)$ is the closest to $\phi$. Since there are at least $1/\epsilon$ elements in $S$, we must have $|Q_S(z) - \phi| \leq \epsilon$. Therefore, $|\phi - Q(z)| \leq |\phi - Q_{S}(z)| + |Q_S(z) -Q(z)| \leq 2\epsilon$. To approximate $v$'s own quantile, simply output $Q_S(x_v)$. We can take an union bound over each node to show w.h.p.~every node outputs correctly. The algorithm uses $O(\log n)$ message size, since in each round each node is only pulling the value of the node contacted.





By using larger message sizes, it is possible to reduce the number of rounds. Consider the following {\bf doubling algorithm}: Each node $v$ maintains a multi-set buffer $S_v$.  Before the first round, each node $v$ sample a random node $t_0(v)$ and set $S_v(0) = \{t_0(v)\}$.
In round $i$, let $t_i(v)$ be the node contacted by $v$ in that round, node $v$ sets $S_v(i) \leftarrow S_v(i) \cup S_{t_i(v)}(i)$. Let $S_v$ denote $S_v(T)$ at the end of the algorithm.

Since the buffer size doubles each round after the first round, after $t = O(\log ((\log n) /\epsilon^2) ) = O(\log \log n+ \log (1/\epsilon))$ rounds, the size of the buffer is $\Omega(\log n /\epsilon^2)$. The message complexity is $O(\log^2 n)$ for this algorithm. On the surface, it seems that we can apply Lemma \ref{lem:sample} to show that $Q_{S_v}(z)$ is a good approximation of $Q(z)$, since $|S_v| = \Omega((\log n)/\epsilon^2 )$. However, the elements in $S_v$ are not independently uniformly sampled. There are correlations among the elements. In the rest of the section, we will show that the correlations among the elements are not too strong so that the algorithm still works. 



\begin{lemma}\label{lem:double_sample} Let $\epsilon = \Omega(1/n^{1/16})$. Suppose that $S_v$ is sampled by using the doubling algorithm described above for $T = O(\log \log n + \log (1/\epsilon^2))$.  Given $z$, w.h.p.~we have $|Q_{S_x}(z) - Q(z)  | \leq \epsilon$. \end{lemma}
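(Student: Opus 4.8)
The plan is to analyze $S_x(T)$ through its \emph{influence tree}. Unrolling the recursion $S_v(i)=S_v(i-1)\cup S_{t_i(v)}(i-1)$ starting from the root $(x,T)$ yields a complete binary tree of depth $T$ in which a node $(v,i)$ with $i\ge 1$ has a left child $(v,i-1)$ and a right child $(t_i(v),i-1)$, and each of the $2^T$ leaves $(v,0)$ contributes the single pre-sampled value $t_0(v)$. Hence $S_x(T)$ is exactly the multiset $\{\,t_0(\mathrm{src}(\ell)) : \ell\in\{0,1\}^T\,\}$, where $\ell$ ranges over root-to-leaf strings and $\mathrm{src}(\ell)$ is the node reached by following $\ell$. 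The essential structural observation is that the tree depends only on the contacts $\{t_i(\cdot)\}_{i\ge1}$, whereas the initializing choices $\{t_0(v)\}_v$ are i.i.d.\ uniform and \emph{independent} of it; so, conditioned on the tree, the indicators $\mathbf{1}[t_0(v)<z]$ are independent Bernoulli$(Q(z))$ variables across distinct nodes $v$.

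The crux is to show that w.h.p.\ the map $\ell\mapsto\mathrm{src}(\ell)$ is injective, i.e.\ all $2^T$ leaves have pairwise distinct source nodes (it is enough to rule out equal source nodes; a coincidental equality $t_0(v_1)=t_0(v_2)$ with $v_1\ne v_2$ does not harm independence of the indicators). Fix two distinct strings $\ell\ne\ell'$ and let $j$ be the largest round index on which they disagree. Up to round $j$ both trace the same node sequence, arriving at a common node $w$; at round $j$ one stays at $w$ while the other moves to $t_j(w)$. From then on $\mathrm{src}(\ell)$ and $\mathrm{src}(\ell')$ evolve as two walks along the time-varying contact graph whose moves are dictated by the remaining (fixed) bits. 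Whenever the two walks sit at distinct nodes $p\ne q$, the chance they coincide one round later is at most $1/n$: in every case this forces one of $t_i(p)=q$, $t_i(q)=p$, or $t_i(p)=t_i(q)$. Tracking the indicator ``the two walks are at the same node'' over the at most $T+1$ relevant rounds (including the $t_0$ step) gives $\Pr[\mathrm{src}(\ell)=\mathrm{src}(\ell')]\le (T+2)/n$ for every fixed pair. Summing over the fewer than $2^{2T}$ ordered pairs of leaves, the expected number of colliding pairs is at most $2^{2T}(T+2)/n$. Since $T=O(\log\log n+\log\frac1\epsilon)$ forces $2^T=O(\tfrac{\log n}{\epsilon^2})$ and the hypothesis $\epsilon=\Omega(n^{-1/16})$ gives $1/\epsilon^2=O(n^{1/8})$, this expectation is $O(\log^3 n/n^{3/4})=1/\poly(n)$, so by Markov w.h.p.\ no two leaves share a source.

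To finish, condition on the contacts $\{t_i(\cdot)\}_{i\ge1}$ and restrict to the high-probability event that all $2^T$ sources are distinct. On this event $Q_{S_x}(z)=\frac{1}{2^T}\sum_{\ell}\mathbf{1}[t_0(\mathrm{src}(\ell))<z]$ is an average of $2^T$ i.i.d.\ Bernoulli$(Q(z))$ variables, so a Chernoff/Hoeffding bound gives $\Pr[\,|Q_{S_x}(z)-Q(z)|>\epsilon\,]\le 2\exp(-\Omega(\epsilon^2 2^T))=1/\poly(n)$, using $2^T=\Omega(\tfrac{\log n}{\epsilon^2})$ with a large enough hidden constant. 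A union bound with the failure probability of the ``distinct sources'' event completes the proof.

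I expect the injectivity step to be the main obstacle: one must argue carefully that two fixed root-to-leaf strings trace walks that, once separated at round $j$, re-collide with probability only $O(T/n)$ even though they may meet and separate several times, and one must check that it is precisely the hypothesis $\epsilon=\Omega(n^{-1/16})$ that makes the second-moment estimate $2^{2T}(T+2)/n$ decay polynomially. Everything afterwards is a routine independent-sampling concentration bound. (Alternatively one could settle for bounding the maximum source multiplicity by $O(\poly\log n)$, which costs only a $\poly\log n$ factor in the required buffer size and still fits the stated round bound; proving exact injectivity is cleaner and is what the exponent $1/16$ is tuned for.)
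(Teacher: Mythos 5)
Your proof is correct and takes a genuinely different route from the paper's. Both arguments condition on the round-$1,\dots,T$ contacts first, observe that this fixes the multiset $S'_v$ of ``source'' nodes, and only then use the independence of the $t_0(\cdot)$ samples; the difference is in how they control the dependence introduced by repeated sources. The paper does \emph{not} prove injectivity. Instead it shows that w.h.p.\ every element of $S'_v$ has multiplicity at most $2^K$ for a constant $K$, and then applies Hoeffding's inequality with bounded weights $m_{S'_v}(y)\le 2^K$, so the exponent loses only a constant factor $2^{2K}$. To bound the multiplicity it tracks the ``inverse image'' set $S^{-1}(w,i)=\{y: w\in S'_y(i)\}$, proves a per-round growth bound $|S^{-1}(w,i)|\le 4|S^{-1}(w,i-1)|+O(\log n)$, deduces $|S^{-1}(w,T)|=O(4^T\log n)$, and from this bounds the per-round probability $p=O(\log^6 n/(\epsilon^8 n))$ that the maximum multiplicity can increase; taking $K$ increases out of $T$ rounds has probability $\binom{T}{K}p^K$, which is $1/\poly(n)$ with an exponent you can drive up by increasing $K$.

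Your pairwise leaf-collision argument is cleaner and bypasses the inverse-image bookkeeping, and you are right that the contact functions at round $i$ are fresh uniform randomness given the walk history down to level $i+1$, so the per-round re-collision probability really is $\le 1/n$ and the union bound over rounds and over the $<2^{2T}$ leaf pairs is sound. The trade-off is that your failure probability is not tunable: the expected number of colliding pairs $\Theta(2^{2T}T/n)$ is pinned down by the required buffer size $2^T=\Theta(\log n/\epsilon^2)$, and under $\epsilon=\Omega(n^{-1/16})$ this only yields a bound of order $\log^3 n/n^{3/4}$. The paper's multiplicity-$\le 2^K$ formulation lets the free constant $K$ absorb whatever polynomial failure exponent is later needed (e.g.\ for a union bound over nodes), which is why it is stated that way. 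Your fallback of settling for $O(\poly\log n)$ multiplicity would recover tunability and is closer in spirit to what the paper actually proves, just with a looser multiplicity bound. Two small nits: the indicator should be $\mathbf{1}[x_{t_0(v)}<z]$ rather than $\mathbf{1}[t_0(v)<z]$, and the $t_0$ step is irrelevant to source distinctness (so the count is ``$T$ rounds,'' not ``$T+1$ including $t_0$''); also, the exponent $1/16$ is not tight for injectivity — your argument only needs $\epsilon=\omega(n^{-1/4}\poly\log n)$.
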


\begin{proof}
For convenience, we also define the multiset $S'_v(0) \leftarrow \{v\}$ and $S_v'(i) \leftarrow S'_v(i) \cup S'_v(t_i(v))$.  Also let $S'_v = S'_v(T)$.
Therefore, $S_v = \{t_0(u) \mid u \in S'_v \}$. First we show that the multiplicity of every element in $S'_v$ is bounded by some constant w.h.p. Then, we will reveal $t_0(u)$ for all $u \in S'_v$ to show that $Q_{S_v}(z)$ is concentrated around $Q(z)$. 

Given a node $w \in V$, we will show that w.h.p.~for every $x \in V$, the multiplicity of $w$ in $S'_v$, $m_{S'_{v}(T)}(w)$, is bounded above by a constant. 
Let $M_i = \max_y m_{S'_y(i)}(w)$. Observe that $M_{i+1} \leq 2 \cdot M_{i}$ due to the update rule. Let $S^{-1}(w,i) =  \bigcup_{y : w \in S'_y(i)} \{y\}$. Note that $M_{i} > M_{i-1}$ only if a node in $S^{-1}(w,i-1)$ chooses a node in $S^{-1}(w,i-1)$. We will first bound the size of $S^{-1}(w,i)$ over the execution of the algorithm, then we can bound the probability that $M_i > M_{i-1}$ increases. 

Initially, $S^{-1}(w,0) = \{w\}$. In round $i$, if a node $z$ chose a node in $S^{-1}(w,i-1)$, then we must have $z \in S^{-1}(w,i)$. Let $X_z$ denote whether the node $z$ has chosen a node in $S^{-1}(w,i-1)$.
\begin{align*}
\E[|S^{-1}(w,i)|] &\leq |S^{-1}(w,i-1)| + \sum_{z \in V} \E[X_z] \\
&\leq |S^{-1}(w,i-1)| + n \cdot |S^{-1}(w,i-1)| / n = 2 \cdot |S^{-1}(w, i-1)|.
\end{align*}

Moreover, we show that $X_z \leq 4 \cdot |S^{-1}(w,i-1)| + O(\log n)$ w.h.p. Let $k = 3 \cdot |S^{-1}(w,i-1)| + K \cdot \log n$. We have
\begin{align*}
\Pr(\sum_{z \in V} X_z \geq k ) &\leq \binom{n}{k}\left(\frac{|S^{-1}(w,i-1)|}{n} \right)^{k} \\
&\leq \left( \frac{en}{k}  \cdot \frac{|S^{-1}(w,i-1)|}{n}  \right)^{k} \\ 
& \leq \left( \frac{e \cdot |S^{-1}(w,i-1)|}{3 \cdot |S^{-1}(w,i-1)|}  \right)^{K \cdot \log n} \\
& \leq 1/\poly(n) \qedhere
\end{align*}

Therefore, w.h.p. $|S^{-1}(w,i)| \leq 4 \cdot |S^{-1}(w,i-1)| + O(\log n) $ for $i = 1 \ldots T$. As a result, $S^{-1}(w,T) = O(4^{K} \cdot \log n )  = O(\log^3 n/ \epsilon^4)$.

Recall that $M_i > M_{i-1}$ only if some node in $S^{-1}(w,i-1)$ selects a node in $S^{-1}(w,i-1)$ during round $i$.
Thus, the probability that $M_i > M_{i-1}$ is at most 
\begin{align*}
p \defeq 1 - (1 - O(\log^3 n / \epsilon^4) / n)^{ O(\log^3 n/ \epsilon^4) }  = O\left(\frac{\log^6 n}{\epsilon^8 n}  \right)
\end{align*}
The probability that $M$ increases in more than $K$ rounds over the $T$ rounds is at most 
\begin{align*}
\binom{T}{K} p^{K} \leq O\left(\left( \frac{e \cdot T}{K} \cdot \frac{\log^6 n}{\epsilon^8 n}    \right)^K\right) = O\left(\left( \frac{e}{K} \cdot \frac{\log^6 n \cdot (\log \log n + \log(1/\epsilon))}{\epsilon^8  n}    \right)^K\right) = 1/\poly(n)
\end{align*}

Therefore, w.h.p. $M_{T} \leq 2^K$.

Now, given $z \in V$, we will show that $Q_{S_v}(z) - \epsilon \leq Q(z) \leq Q_{S_v}(z) +\epsilon$. For every element $y \in S'_v$, let $Z_y$ be the indicator random variable that denotes whether $x_{t_0(y)} \leq z$. Note that $\Pr(Z_y = 1) = Q(z)$. Therefore, we have
\begin{align*}
\E[\rank_{S_x}(z)] = \sum_{y \in S'_v} \E[Z_y] \cdot m_{S'_v}(y) = Q(z) \cdot |S_v| 
\end{align*}





Since the events $\{Z_y\}_{y \in S_v}$ are mutually independent. We can show that the quantity $\sum_{y \in S_v} Z_y \cdot m_{S'_v}(y)$ concentrates around 
$Q(z) \cdot |S_v(z)|$ by using Hoeffiding's inequality.
\begin{align*}\Pr(|\sum_{y \in S_v} Z_y \cdot m_{S'_v}(y) - Q(z) \cdot |S_v|| \geq \epsilon |S_v|) &\leq \exp\left(-\frac{2\epsilon^2 |S_v|^2 }{\sum_{y \in S_{v}} m_{S'_v}(y)^2} \right) \\
&\leq \exp\left(-\frac{2\epsilon^2 |S_v|^2 }{2^{2K} \cdot |S_v|}  \right) && m_{S'_v}(y) \leq M_T \leq 2^{K}\\
&\leq \exp\left(-\Omega(\log n)\right) &&  |S_v| = \Omega((\log n) /\epsilon^2)\\
&\leq 1/\poly(n)\end{align*}

Therefore, w.h.p.~$|Q_{S_v}(z) - Q(z)| \leq \epsilon$.

\end{proof}

\subsection{Compaction}

In this section, we show that instead of storing the full buffer, it is possible to only keep a summary of the buffer, where the size of the summary is $\Theta(\frac{1}{\epsilon} \cdot (\log \log n + \log (1/\epsilon)) )$. Then, when a node $v$ contacts $t(v)$, $v$ only pulls the summary of $t(v)$. This reduces the message complexity. 

Let $k = \Theta(\frac{1}{\epsilon} \cdot (\log \log n + \log (1/\epsilon)) )$ be a power of 2 that denotes the size of the buffer. Also, we assume that $n' = |S_v|$ is also a power of 2 thus the algorithm runs for $T = \log_2 n' + 1$ rounds. Suppose that $v$ contacts $t(v)$ in a given round, the new buffer updating rule is:
$$\tilde{S}_v \leftarrow \Compact(\tilde{S}_v \cup \tilde{S}_{t(v)})$$, where $\Compact(Z)$ is a {\bf compaction} operation. The compaction operation leaves the buffer unchanged if $|Z| \leq k$. Otherwise, it sort the elements in non-decreasing order and then extract the elements in the even positions. Since $|Z| \leq 2k$, we must have $|\Compact(Z)| \leq k$.


Let $S_{v}$ denote the buffer of $x$ at end of round $T$. Let $\tilde{S}_v \subseteq S_{v}$ denote the buffer of $v$ with the compaction updating rule at the end of round $T$. We will show that $Q_{\tilde{S}_{v}}(z)$ is a good approximation of $Q_{S_{v}}(z)$ for any value $z$.

Suppose that the content of ${S}_v$ is known. At the end of round $i$, $S_v$ can be expressed as the union of the buffers of $2^{T - i}$ nodes. Let $U_v(i)$ denote the multi-set of $2^{T - i}$ nodes. Let $S_{u}(i)$ and $\tilde{S}_{u}(i)$ denote the buffer and the compacted buffer of $u$ at the end of round $i$ respectively.

Now let $U_v(i)$ be the set of nodes such that $\bigcup_{u \in U_v(i)} S_{u}(i) = S_v$, where $|U_v(i)| = 2^{T - i}$. Now we consider the compacted version of the set $C_{v}(i) \defeq \bigcup_{u \in U_v(i)} \tilde{S}_{u}(i)$. Note that $C_{v}(i) = S_v$ for $i = 1,\ldots, \log_2 k + 1$ (that is, before any compaction happens). 	Also, $C_{v}(T) = \tilde{S}_v$. We will show that in each transition along $S_v =  C_v(\log_2k + 1), C_v(\log_2k + 2), \ldots, C_v(T) = \tilde{S}_v$, the error introduced by compaction is sufficiently small. Let $a \cdot S$ denote the multiset where each element $s \in S$ is duplicated to have $a$ occurrence. For a given element $z$, we will compare the rank of $z$ in $w_{i} \cdot C_v(i)$ and $w_{i+1}\cdot C_v(i+1)$ for $\log_2 k + 1 \leq i < T$ to derive the loss of  compaction. 

Let $h_i$ denotes the number of compaction operations done on $S_v$ at the end of round $i$. Therefore, $h_i = \max(0, i - \lg k - 1)$. Let $w_i = 2^{h_i}$ denote the weight of the buffer at the end of round $i$. Note that the cardinality of the set $|w_i \cdot C_v(i)| = |S_v|$ for $1 \leq i \leq T$. 

Given a multi-set $S$ and any element $z$ (not necessarily in $S$), we let $R_S(z)$ to denote the rank of $z$ in $S$ (i.e. the number of elements in $S$ that are smaller than or equal to $z$).

\begin{lemma}\label{lem:dif} Let $v$ be a node and let $z$ by any element. For $\log_2 k + 1 \leq i < T$, $|R_{(w_i \cdot C_v(i))}(z) - R_{(w_{i+1} \cdot C_v(i+1))}(z)| \leq |U_v(i+1)| \cdot w_i $. \end{lemma}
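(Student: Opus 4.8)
The claim compares the rank of $z$ in the weighted multiset $w_i \cdot C_v(i)$ against $w_{i+1} \cdot C_v(i+1)$. The key observation is that going from round $i$ to round $i+1$, each of the $2^{T-i}$ nodes $u \in U_v(i)$ pairs up with exactly one partner to form the $2^{T-(i+1)} = |U_v(i+1)|$ nodes of $U_v(i+1)$; when two compacted buffers $\tilde S_u(i), \tilde S_{u'}(i)$ are unioned and then (possibly) compacted, the rank of $z$ changes by a controlled amount. So the plan is to decompose the total rank difference as a sum over the $|U_v(i+1)|$ merge-and-compact operations and bound each summand separately, then add up.

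\textbf{Key steps.} First I would fix the bookkeeping: write $C_v(i) = \biguplus_{u \in U_v(i)} \tilde S_u(i)$ and $C_v(i+1) = \biguplus_{u' \in U_v(i+1)} \tilde S_{u'}(i+1)$, and note that for each $u' \in U_v(i+1)$ there are two nodes $a, b \in U_v(i)$ with $\tilde S_{u'}(i+1) = \Compact\bigl(\tilde S_a(i) \uplus \tilde S_b(i)\bigr)$ (or just the union, if no compaction happens at round $i+1$, i.e. if $i+1 \le \log_2 k + 1$, but we are in the range where it does). Since rank is additive over disjoint unions, $R_{w_i \cdot C_v(i)}(z) = w_i \sum_{u} R_{\tilde S_u(i)}(z)$ and similarly for $i+1$ with weight $w_{i+1} = 2 w_i$. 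Grouping the round-$i$ sum into the $|U_v(i+1)|$ pairs, the difference becomes $\sum_{u'} \bigl[ w_i\bigl(R_{\tilde S_a(i)}(z) + R_{\tilde S_b(i)}(z)\bigr) - w_{i+1} R_{\Compact(\tilde S_a(i) \uplus \tilde S_b(i))}(z) \bigr]$. The second step is the single-compaction bound: for any multiset $Z$ with $|Z| \le 2k$, $\bigl| R_Z(z) - 2 R_{\Compact(Z)}(z) \bigr| \le 1$ — this is the standard Manku--Rajagopalan--Lindsay compactor error bound (each retained even-position element ``stands for'' itself and its odd-position neighbor, and $z$ falls between two consecutive retained elements, contributing an error of at most one). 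Applying this with $Z = \tilde S_a(i) \uplus \tilde S_b(i)$ gives that each summand has absolute value at most $w_i$. Summing over the $|U_v(i+1)|$ pairs yields the bound $|U_v(i+1)| \cdot w_i$, as claimed.

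\textbf{Main obstacle.} The one subtlety worth care is the boundary case: when $i+1 = \log_2 k + 1$, no compaction is triggered at round $i+1$, so $\Compact$ acts as the identity and the per-pair error is $0$, which is consistent with (and stronger than) the stated bound; and when $i < \log_2 k + 1$ the lemma's hypothesis excludes it. One should also double-check that $w_i$ is indeed constant across all nodes of $U_v(i)$ — this holds because the compaction schedule depends only on the round index (a node has compacted iff its buffer has overflowed, which happens at the same round $\lg k + 1$ for everyone since all buffers double in lockstep), so $h_i = \max(0, i - \lg k - 1)$ is well-defined node-independently, as the paragraph before the lemma already sets up. With those points handled the argument is a clean telescoping-free sum, so I do not expect a real difficulty — the ``hard part'' is really just stating the single-compactor estimate cleanly and confirming that the $2^{T-i}$ buffers at round $i$ pair up exactly into the $2^{T-i-1}$ buffers at round $i+1$, which follows directly from the binary-tree structure of the doubling algorithm.
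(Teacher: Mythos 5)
Your proof is correct and follows essentially the same route as the paper: decompose the rank difference over the $|U_v(i+1)|$ pairwise merge-and-compact steps, apply a per-compaction error bound, and sum. Your stated single-compaction estimate $|R_Z(z) - 2R_{\Compact(Z)}(z)| \le 1$ is precisely the paper's parity case analysis ($R_2 = R_1$ if the pre-compaction unweighted rank is even, $R_2 = R_1 - w_i$ if odd) written as a one-line inequality, so there is no substantive difference.
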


\begin{proof}
Note that $C_v(i+1)$ is obtained from $C_v(i)$ by applying $|U_v(i+1)|$ compactions, where there is one compaction operation for each $u \in U_v(i+1)$. Each compaction operation compacts two buffers $\tilde{S}_u(i)$ and $\tilde{S}_{t(u)}(i)$ into one and also doubles the weight. 

Consider the effect of one compaction operation on the rank of $z$. Let $R_1$ denote the rank of $z$ in $w_i \cdot (\tilde{S}_u(i) \cup \tilde{S}_{t(u)}(i))$ (i.e.~before the compaction) and $R_2$ denote the rank of $z$ in $w_{i+1} \cdot \tilde{S}_{u}(i+1)$ (i.e.~after the compaction). If $R_1$ is even, then we must have $R_1 = R_2$. If $R_1$ is odd, then we have $R_2 = R_1 - w_{i}$. 

Since there are most $|U_{v}(i+1)|$ compactions, the total difference created by all of them is at most $w_{i} \cdot |U_v(i+1)|$.
\end{proof}

\begin{corollary}\label{cor:error_bound} Let $v$ be a node and let $z$ by any element. We have $|R_{S_v}(z) - R_{w_{T} \cdot \tilde{S}_v}(z)| \leq  \frac{n'}{2k} \log \frac{n'}{k}$.\end{corollary}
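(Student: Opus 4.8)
The plan is to telescope the single-compaction-step estimate of Lemma~\ref{lem:dif} along the chain
$$S_v \;=\; C_v(\lg k + 1),\; C_v(\lg k + 2),\; \ldots,\; C_v(T) \;=\; \tilde S_v.$$
First I would record the two endpoint facts already observed before the corollary: no compaction has occurred by the end of round $\lg k + 1$, so $C_v(\lg k+1) = S_v$, and since $h_{\lg k+1} = \max(0,0) = 0$ we have weight $w_{\lg k+1} = 1$; at the other end $C_v(T) = \tilde S_v$ with weight $w_T = 2^{h_T}$. Applying the triangle inequality to the ranks of a fixed element $z$ across this chain gives
$$\bigl|R_{S_v}(z) - R_{(w_T\cdot \tilde S_v)}(z)\bigr| \;\le\; \sum_{i=\lg k+1}^{T-1} \bigl|R_{(w_i\cdot C_v(i))}(z) - R_{(w_{i+1}\cdot C_v(i+1))}(z)\bigr|.$$

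Next I would bound each summand by Lemma~\ref{lem:dif}, which gives $|U_v(i+1)|\cdot w_i$, and evaluate this product explicitly. Using $|U_v(i+1)| = 2^{T-(i+1)}$ and, for $i \ge \lg k+1$, $w_i = 2^{h_i} = 2^{\,i-\lg k-1}$, the product is the $i$-independent quantity
$$|U_v(i+1)|\cdot w_i \;=\; 2^{T-i-1}\cdot 2^{\,i-\lg k-1} \;=\; 2^{\,T-\lg k-2}.$$
Since the doubling/compaction schedule runs for $T = \lg n' + 1$ rounds, i.e.\ $2^{T} = 2n'$, this equals $\tfrac{2n'}{4k} = \tfrac{n'}{2k}$.

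Finally I would count the terms: the sum ranges over $i = \lg k+1,\ldots,T-1$, i.e.\ over $T-1-\lg k = \lg n' - \lg k = \lg(n'/k)$ values of $i$. Multiplying the per-term bound $\tfrac{n'}{2k}$ by the number of terms yields exactly $\tfrac{n'}{2k}\log\tfrac{n'}{k}$, as claimed. The argument is almost entirely bookkeeping; the only point requiring care is to keep the indices and weights consistent — in particular verifying that the intermediate multisets $C_v(i)$ genuinely start at $S_v$ (no compaction for $i \le \lg k+1$) and end at $\tilde S_v$, and that $w_i = 2^{h_i}$ with $h_i = \max(0, i-\lg k-1)$ keeps $|w_i\cdot C_v(i)| = n'$ throughout, so that the rank differences are being compared between multisets of the same total size. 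Beyond this there is no real obstacle.
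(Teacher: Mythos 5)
Your proposal matches the paper's proof essentially step for step: the same telescoping decomposition along $C_v(\lg k+1),\ldots,C_v(T)$, the same application of Lemma~\ref{lem:dif} to each link, the same observation that $|U_v(i+1)|\cdot w_i = 2^{T-\lg k-2}$ is $i$-independent, and the same count of $\lg(n'/k)$ terms. Your extra remarks unpacking $T = \lg n' + 1$ and confirming $|w_i \cdot C_v(i)| = n'$ are just a more explicit rendering of the same bookkeeping.
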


\begin{proof}
First, note that $S_v = C_v(\log_2 k + 1)$ and $\tilde{S}_v = C_v(T)$. For any $z$, we have:
\begin{align*}
|R_{S_v}(z) - R_{w_{T}\cdot \tilde{S}_v}(z)| &\leq \left|\sum_{i=\log_2 k + 1}^{T-1} R_{w_i \cdot C_v(i)}(z) - R_{w_{i+1} \cdot C_v(i+1)}(z)\right| && w_{\log_2 k + 1} = 1  \\
&\leq  \sum_{i=\log_2 k + 1}^{T-1}\left| R_{w_i \cdot C_v(i)}(z) - R_{w_{i+1} \cdot C_v(i+1)}(z)\right| \\
&\leq \sum_{i=\log_2 k + 1}^{T-1} |U_v(i+1)| \cdot w_i && \mbox{By Lemma \ref{lem:dif}} \\
&= \sum_{i=\log_2 k + 1}^{T-1} 2^{T - i - 1} \cdot 2^{i - \log_2 k - 1} \\
&= \sum_{i=\log_2 k + 1}^{T-1} 2^{T - \log_2 k - 2} = \frac{n'}{2k} \log \frac{n'}{k} 
\end{align*}
\end{proof}

Therefore, if we set $k = \Theta(\frac{1}{\epsilon} \log (\epsilon n'))$, then we have $|R_{S_v}(z) - R_{w_{T} \cdot \tilde{S}_v}(z)| \leq \epsilon n'$ by Corollary \ref{cor:error_bound}. This further implies that $|Q_{S_v}(z) - Q_{\tilde{S}_v}(z)| \leq \epsilon$. We immediately obtain the following Corollary.

\begin{corollary}\label{lem:compaction} Given any $0 < \epsilon < 1$. Let $S_v$ denote the buffer of $v$ with the original updating rule after $T = \Theta(\log n')$ rounds, where $n' = |S_v|$.  Let $\tilde{S}_v$ denote the buffer of $v$ using the compaction updating rule with buffer size $k = \Theta(\frac{1}{\epsilon} \log (\epsilon n'))$. For any $z$, we have $|Q_{S_v}(z) - Q_{\tilde{S}_v}(z)| \leq \epsilon$.\end{corollary}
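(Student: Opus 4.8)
The plan is to obtain this as an immediate consequence of Corollary~\ref{cor:error_bound}, combined with the elementary fact that a normalized rank is unchanged when every multiplicity in a multiset is scaled by a common factor.

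First I would instantiate Corollary~\ref{cor:error_bound} with the stated buffer size. The corollary gives, for every $z$, $|R_{S_v}(z) - R_{w_T\cdot\tilde S_v}(z)| \le \frac{n'}{2k}\log\frac{n'}{k}$. Choosing $k = \Theta(\frac{1}{\epsilon}\log(\epsilon n'))$ --- concretely, $k$ equal to the smallest power of two that is at least $\frac{1}{\epsilon}\log(\epsilon n')$, so that the \Compact{} operation remains well defined --- we have $k \ge \frac{1}{\epsilon}\log(\epsilon n')$, hence $\frac{n'}{k} \le \epsilon n'$ and therefore $\log\frac{n'}{k} \le \log(\epsilon n') \le k\epsilon$. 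Plugging this in gives $\frac{n'}{2k}\log\frac{n'}{k} \le \frac{1}{2}\epsilon n' \le \epsilon n'$, so $|R_{S_v}(z) - R_{w_T\cdot\tilde S_v}(z)| \le \epsilon n'$ for all $z$.

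Next I would translate this rank bound into the desired bound on the $Q$-values, which by definition are normalized (fractional) ranks. As recorded in the construction, $|w_i\cdot C_v(i)| = |S_v| = n'$ for all $i$; in particular the weighted multiset $w_T\cdot\tilde S_v$ has cardinality $n'$, i.e.\ $w_T|\tilde S_v| = n'$. Since scaling all multiplicities of $\tilde S_v$ by $w_T$ multiplies both the rank of $z$ and the size by $w_T$, we get $Q_{\tilde S_v}(z) = R_{\tilde S_v}(z)/|\tilde S_v| = R_{w_T\cdot\tilde S_v}(z)/n'$, and likewise $Q_{S_v}(z) = R_{S_v}(z)/n'$. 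Dividing the inequality of the previous paragraph by $n'$ then yields $|Q_{S_v}(z) - Q_{\tilde S_v}(z)| \le \epsilon$, which is the claim.

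There is no substantive obstacle here: the corollary is pure bookkeeping on top of Corollary~\ref{cor:error_bound}. The only two points requiring care are (i) the short logarithmic estimate showing that the chosen $k$ forces the error term $\frac{n'}{2k}\log\frac{n'}{k}$ down to at most $\epsilon n'$ (while keeping $k$ a power of two so that compaction is legal), and (ii) being explicit that $Q$ denotes a normalized rank, so that replacing $S_v$ successively by $w_T\cdot\tilde S_v$ and by $\tilde S_v$ itself costs nothing. Both are routine. (If one uses the strict ``$<z$'' convention for $Q_S$ rather than the ``$\le z$'' convention for $R_S$, note that Corollary~\ref{cor:error_bound} holds for every $z$, in particular for values just below $z$, so the discrepancy is immaterial.)
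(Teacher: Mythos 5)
Your proposal is correct and matches the paper's own (very terse) justification: instantiate Corollary~\ref{cor:error_bound}, observe that $k = \Theta(\frac{1}{\epsilon}\log(\epsilon n'))$ makes the error term $\frac{n'}{2k}\log\frac{n'}{k}$ at most $\epsilon n'$, and then divide by $n'$ to pass from absolute to normalized ranks. The paper states this in two sentences without spelling out the logarithmic estimate or the normalization identity; you have simply filled in those routine details.
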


\begin{theorem}\label{thm:main} Given $\epsilon = \Omega(1/n^{1/16})$. Let $\tilde{S}_v$ denote the buffer of $v$ using the compaction updating rule with buffer size $k = \Theta(\frac{1}{\epsilon} \cdot (\log \log n + \log (1/\epsilon)) )$ after $T = \Theta(\log \log n + \log (1/\epsilon) )$ rounds. For any $z$, w.h.p.~we have $|Q(z) - Q_{ \tilde{S}_v}(z)| \leq \epsilon$.
\end{theorem}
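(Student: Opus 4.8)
The plan is to obtain this statement purely by composing the two approximation guarantees already proved for the doubling scheme: the sampling-quality bound for the uncompacted buffer (Lemma~\ref{lem:double_sample}) and the compaction-error bound (Corollary~\ref{lem:compaction}), glued together by the triangle inequality. First I would fix the number of rounds $T = \Theta(\log\log n + \log\oneeps)$ to be large enough that the virtual full buffer $S_v$ (the multiset one would hold with no compaction) has size $n' = |S_v| = \Theta(\log n/\epsilon^2)$; since the full buffer doubles each round after the first, $n' = \Theta(2^T)$, so $T = \Theta(\log n') = \Theta(\log\log n + \log\oneeps)$, which is exactly the round count demanded by both earlier statements. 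Running the doubling algorithm with the compaction updating rule at buffer size $k$, let $S_v$ be this virtual full buffer and $\tilde S_v \subseteq S_v$ the actual compacted buffer after $T$ rounds.

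Next I would check that the parameters line up. Corollary~\ref{lem:compaction} requires $k = \Theta(\oneeps\log(\epsilon n'))$; plugging in $n' = \Theta(\log n/\epsilon^2)$ gives $\epsilon n' = \Theta(\log n/\epsilon)$ and hence $\log(\epsilon n') = \Theta(\log\log n + \log\oneeps)$, so the required buffer size collapses to $k = \Theta(\oneeps(\log\log n + \log\oneeps))$, matching the theorem. With this $k$, Corollary~\ref{lem:compaction} applied with error parameter $\epsilon/2$ gives, \emph{deterministically} for every value $z$, that $|Q_{S_v}(z) - Q_{\tilde S_v}(z)| \le \epsilon/2$. Independently, Lemma~\ref{lem:double_sample} applied with error parameter $\epsilon/2$ (legitimate since $\epsilon/2 = \Omega(1/n^{1/16})$) gives that for a fixed $z$, with probability $1 - 1/\poly(n)$ over the randomness of the contacts, $|Q_{S_v}(z) - Q(z)| \le \epsilon/2$; since the rank function takes at most $n+1$ distinct thresholds, a union bound over those thresholds makes this hold for all relevant $z$ simultaneously w.h.p.

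Conditioning on that w.h.p.\ event and combining it with the worst-case compaction bound on the \emph{same} execution yields $|Q(z) - Q_{\tilde S_v}(z)| \le |Q(z) - Q_{S_v}(z)| + |Q_{S_v}(z) - Q_{\tilde S_v}(z)| \le \epsilon$, i.e.\ the claim after renaming $\epsilon/2 \mapsto \epsilon$ (absorbed into the $\Theta(\cdot)$'s for $k$ and $T$). There is no genuine obstacle here — the statement is a repackaging — so the only things needing care are: (i) the mutual consistency of $n'$, $T$, and $k$, and the collapse of $\log(\epsilon n')$ to $\Theta(\log\log n + \log\oneeps)$; (ii) the observation that Corollary~\ref{lem:compaction} holds for \emph{every} realization of $S_v$, so it composes cleanly with the probabilistic guarantee of Lemma~\ref{lem:double_sample}; and (iii) the harmless union bound over the $O(n)$ thresholds. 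I would also note in passing that each transmitted message is a compacted buffer of $k$ values of $O(\log n)$ bits, so the message size is $O(\oneeps\log n\,(\log\log n + \log\oneeps))$ bits, recovering the message complexity advertised in the introduction.
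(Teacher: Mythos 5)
Your proposal is correct and takes essentially the same approach as the paper: the paper's proof is exactly the same three-step argument (apply Lemma~\ref{lem:double_sample} with error $\epsilon/2$, apply Corollary~\ref{lem:compaction} with error $\epsilon/2$, combine by the triangle inequality), and the parameter bookkeeping you spell out is implicit there. Your added union bound over $O(n)$ thresholds and the message-size remark are extras beyond what the paper records, but do not change the nature of the argument.
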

\begin{proof}


Let $\epsilon_1 = \epsilon_2 = \epsilon/ 2$. First, by applying Lemma  \ref{lem:double_sample}, we have $|Q_{S_v}(z) - Q(z) | \leq \epsilon_1$ w.h.p. where $n' = |S_v| = \Theta(\log n / \epsilon^2)$. Then, we apply Corollary \ref{lem:compaction} to show that the corresponding compaction update rule obtains a compacted buffer $\tilde{S}_v$ of size $O(\frac{1}{\epsilon}(\log \log n + \log (1/\epsilon)))$ such that $|Q_{S_v}(z) - Q_{\tilde{S}_v}(z)| \leq \epsilon_2$. By triangle inequality, we have:
$$|Q(z) - Q_{\tilde{S}_v}(z)| \leq |Q_{S_v}(z) - Q(z) |+|Q_{S_v}(z) - Q_{\tilde{S}_v}(z)| \leq \epsilon_1 + \epsilon_2 \leq \epsilon \qedhere
$$ \end{proof}

\section{Tools}
\begin{lemma}\label{lem:chernofforignial} \textnormal{(Chernoff Bound)} Let $X_1,\dots, X_n$ be indicator variables such that $\Pr(X_i=1) = p$. Let $X = \sum_{i=1}^{n} X_i$. Then, for $\delta > 0$:
\begin{align*}
&&\Pr(X\geq(1+\delta)\E[X]) &< \left[ \frac{e^{\delta}}{(1+\delta)^{(1+\delta)}} \right]^{\E[X]}\\
&&\Pr(X\leq(1-\delta)\E[X]) &< \left[ \frac{e^{\delta}}{(1-\delta)^{(1-\delta)}} \right]^{\E[X]}
\end{align*}
The two bounds above imply that for $0 < \delta < 1$, we have:
\begin{align*}
&&\Pr(X\geq(1+\delta)\E[X]) &< e^{-\delta^2 \E[X] / 3} \\
&&\Pr(X\leq(1-\delta)\E[X]) &< e^{-\delta^2 \E[X] / 2}.
\end{align*}
\end{lemma}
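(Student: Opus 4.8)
The plan is to use the standard exponential-moment (Chernoff) method. Write $\mu = \E[X] = np$. For the upper tail, fix a parameter $t > 0$ and apply Markov's inequality to the nonnegative random variable $e^{tX}$:
\[
\Pr\bigl(X \ge (1+\delta)\mu\bigr) = \Pr\bigl(e^{tX} \ge e^{t(1+\delta)\mu}\bigr) \le e^{-t(1+\delta)\mu}\,\E[e^{tX}].
\]
Because the $X_i$ are independent, $\E[e^{tX}] = \prod_{i=1}^n \E[e^{tX_i}] = \bigl(1 + p(e^t - 1)\bigr)^n \le e^{np(e^t-1)} = e^{\mu(e^t-1)}$, using $1 + x \le e^x$. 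This gives $\Pr(X \ge (1+\delta)\mu) \le \exp\bigl(\mu(e^t - 1 - t(1+\delta))\bigr)$, and I would then choose $t$ to minimize the exponent. Elementary calculus gives the minimizer $t = \ln(1+\delta)$, which is positive exactly because $\delta > 0$; substituting yields $\Pr(X \ge (1+\delta)\mu) \le \bigl(e^{\delta}/(1+\delta)^{1+\delta}\bigr)^{\mu}$, the first claimed inequality.

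The lower tail is handled symmetrically. For $s > 0$, Markov's inequality applied to $e^{-sX}$ gives $\Pr(X \le (1-\delta)\mu) \le e^{s(1-\delta)\mu}\,\E[e^{-sX}] \le \exp\bigl(\mu(e^{-s} - 1 + s(1-\delta))\bigr)$ by the same product bound. Minimizing over $s$ gives $s = -\ln(1-\delta)$, which is positive when $0 < \delta < 1$, and substituting produces $\Pr(X \le (1-\delta)\mu) \le \bigl(e^{-\delta}/(1-\delta)^{1-\delta}\bigr)^{\mu}$; since $e^{-\delta} \le e^{\delta}$ this implies the stated bound with $e^{\delta}$ in the numerator.

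Finally, for the clean forms I would reduce each tight bound to $e^{-\delta^2\mu/3}$ and $e^{-\delta^2\mu/2}$ via two one-variable inequalities valid for $0 < \delta < 1$: namely $(1+\delta)\ln(1+\delta) - \delta \ge \delta^2/3$ and $(1-\delta)\ln(1-\delta) + \delta \ge \delta^2/2$. Each follows by standard elementary estimates: both sides vanish at $\delta = 0$, and the derivative of the difference is nonnegative on $(0,1)$ --- for instance $\frac{d}{d\delta}\bigl[(1-\delta)\ln(1-\delta) + \delta - \tfrac{\delta^2}{2}\bigr] = -\ln(1-\delta) - \delta = \sum_{k \ge 2}\delta^k/k \ge 0$, and an analogous (slightly longer) computation handles $(1+\delta)\ln(1+\delta) - \delta - \delta^2/3$. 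Plugging these in yields the two simplified bounds. The whole argument is routine; the only point needing minor care is the range of $\delta$ in the simplified upper-tail bound, where the $\delta^2/3$ estimate relies on $\delta \le 1$, consistent with the hypothesis. I do not anticipate any genuine obstacle, since this is the classical Chernoff--Hoeffding derivation; the ``hard part'' is merely verifying the elementary logarithmic inequalities used in the last step.
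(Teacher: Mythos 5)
The paper states this Chernoff bound in its Tools appendix as a standard known result and gives no proof, so there is no paper argument to compare against; your derivation is the classical one (exponentiate, apply Markov to $e^{tX}$, factor by independence, bound $\E[e^{tX_i}] \le e^{p(e^t-1)}$, optimize $t$) and it is correct. One small point worth flagging: the paper's displayed lower-tail bound has $e^{\delta}$ in the numerator, but the version you need in order to deduce $e^{-\delta^2\mu/2}$ is the tight $e^{-\delta}$ form — the $e^{\delta}$ form is strictly weaker and does not on its own imply the clean bound. You handle this correctly by deriving the $e^{-\delta}$ form and observing the stated bound with $e^{\delta}$ is an immediate relaxation, so your proof covers both claims. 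Also note that the upper-tail elementary inequality $(1+\delta)\ln(1+\delta) - \delta \ge \delta^2/3$ is a little more delicate than the lower-tail one (the derivative of the difference is not monotone, so one needs a slightly longer argument, e.g.\ checking sign changes or comparing at the endpoints), but you acknowledge this and it does hold on $(0,1]$. Overall the proof is sound and standard.
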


\begin{lemma}\label{lem:largedev} 
Suppose that for any $\delta > 0$, 
$$\Pr\left(X  >(1+\delta)np \right) \leq \left[ \frac{e^{\delta}}{(1+\delta)^{(1+\delta)}} \right]^{np}$$ then for any $M \geq np$ and $0 < \delta < 1$,
\begin{align*} \Pr\left(X  > np + \delta M \right) &\leq \left[ \frac{e^{\delta}}{(1+\delta)^{(1+\delta)}} \right]^{M} \\ &\leq e^{-\delta^2M/3}\end{align*}
\end{lemma}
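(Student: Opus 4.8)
The plan is to reduce the asserted bound to the hypothesized Chernoff-type inequality by rescaling the deviation parameter, and then to control the resulting exponent with a short one-variable monotonicity argument. First I would write $np + \delta M = (1+\delta')\,np$ where $\delta' := \delta M/(np)$; since $M \ge np$ we have $\delta' \ge \delta > 0$, so the hypothesis applies verbatim and yields $\Pr(X > np + \delta M) = \Pr\big(X > (1+\delta')\,np\big) \le \big[e^{\delta'}/(1+\delta')^{1+\delta'}\big]^{np}$. It then remains to show this is at most $\big[e^{\delta}/(1+\delta)^{1+\delta}\big]^{M}$, and finally to replace the latter by $e^{-\delta^2 M/3}$.

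For the first of these, set $\psi(x) := x - (1+x)\ln(1+x)$, so that the bound just obtained is $e^{np\,\psi(\delta')}$ and the target is $e^{M\,\psi(\delta)}$. Writing $t := M/(np) \ge 1$, so that $\delta' = \delta t$, the inequality $np\,\psi(\delta t) \le M\,\psi(\delta)$ is equivalent to $\psi(\delta t)/t \le \psi(\delta)$, i.e.\ to the function $t \mapsto \psi(\delta t)/t$ being non-increasing on $[1,\infty)$. Differentiating, the derivative of $\psi(\delta t)/t$ has the same sign as $u\,\psi'(u) - \psi(u)$ with $u = \delta t$; since $\psi'(u) = -\ln(1+u)$, this quantity equals $\ln(1+u) - u$, which is $\le 0$ for all $u \ge 0$. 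Hence $\psi(\delta t)/t$ is non-increasing, giving $\psi(\delta t)/t \le \psi(\delta)$ and therefore the first displayed inequality of the lemma.

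The passage to $e^{-\delta^2 M/3}$ is then just the standard estimate $\psi(\delta) \le -\delta^2/3$ valid for $0 < \delta < 1$ --- the same one used to derive the second pair of bounds in Lemma~\ref{lem:chernofforignial} --- raised to the $M$-th power. Everything here is elementary; the only step that is not purely mechanical is the comparison of exponents, and the one point to be careful about is that the monotonicity of $\psi(\delta t)/t$ is needed with $\delta t$ ranging over all of $[\delta,\infty)$, not merely small values, which is fine since $\ln(1+u) \le u$ holds for every $u \ge 0$. (Alternatively, if one is willing to assume the moment-generating-function bound $\E[e^{sX}] \le e^{np(e^{s}-1)}$ that underlies the hypothesis, the whole statement follows by optimizing over $s$ directly; but since the lemma is phrased purely in terms of the tail bound, the rescaling argument above is the more faithful route.)
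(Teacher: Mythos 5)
Your proof is correct and is essentially the same argument as the paper's: both rescale the deviation to write $np+\delta M=(1+\delta t)np$ with $t=M/(np)\ge 1$, reduce the comparison of exponents to the monotonicity (in $t$) of a single function, and close with the elementary inequality $\ln(1+u)\le u$. Your $\psi(\delta t)/t$ being non-increasing is exactly the paper's $f(t)=\frac{(1+t\delta)\ln(1+t\delta)}{t}-(1+\delta)\ln(1+\delta)$ being non-negative (they differ by a sign and an additive constant, and the derivative computations coincide), and the final step $\psi(\delta)\le-\delta^2/3$ is the same standard estimate the paper invokes.
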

\begin{proof}
Without loss of generality, assume $M = tnp$ for some $t \geq 1$, we have 
\begin{align*}
&&&\Pr\left(X > np + \delta M\right) \\
&&&\leq \left[ \frac{e^{t \delta}}{(1+t \delta)^{(1+t \delta)}} \right]^{np} \\
&&&=  \left[ \frac{e^{\delta}}{(1+t \delta)^{(1+t \delta)/t}} \right]^{M} \\
&&&\leq \left[ \frac{e^{\delta}}{(1+\delta)^{(1+\delta)}} \right]^{M} \tag*{$(*)$} \\
&&& \leq e^{-\delta^2 M/ 3} \tag*{$\frac{e^{\delta}}{(1+\delta)^{(1+\delta)}} \leq e^{-\delta^2 / 3}$ for $0 < \delta < 1$}
\end{align*}
Inequality (*) follows if $(1+t \delta)^{(1+t \delta)/t} \geq (1+\delta)^{(1+\delta)}$, or equivalently, $((1+t \delta)/t) \ln (1+t \delta) \geq (1+\delta) \ln (1+\delta)$. 
Letting $f(t) = ((1+t \delta)/t )\ln (1+t \delta) - (1+\delta) \ln (1+\delta)$, we have $f'(t) = \frac{1}{t^2}\left(\delta t - \ln(1+\delta t) \right) \geq 0$ for $t > 0$. Since $f(1) = 0$ and $f'(t) \geq 0$ for $t > 0$, we must have $f(t) \geq 0$ for $t\geq 1$. 
\end{proof}

\end{document}